\newsavebox{\@brx}
\newcommand{\llangle}[1][]{\savebox{\@brx}{\(\m@th{#1\langle}\)}%
  \mathopen{\copy\@brx\kern-0.5\wd\@brx\usebox{\@brx}}}
\newcommand{\rrangle}[1][]{\savebox{\@brx}{\(\m@th{#1\rangle}\)}%
  \mathclose{\copy\@brx\kern-0.5\wd\@brx\usebox{\@brx}}}
\begin{document}

%%%%%%%%%%%%%%%%%%%%%%%%%%%%%%%%%%%%%%%%%%%%%%%%%%%%%%%%%%%%%%%%%%%%%%%%
%%%%%%%%%%%%%%%%%%%%%%%%%%     Macros      %%%%%%%%%%%%%%%%%%%%%%%%%%%%%
%%%%%%%%%%%%%%%%%%%%%%%%%%%%%%%%%%%%%%%%%%%%%%%%%%%%%%%%%%%%%%%%%%%%%%%%
\def\e#1\e{\begin{equation}#1\end{equation}}
\def\ea#1\ea{\begin{align}#1\end{align}}
\def\eq#1{{\rm(\ref{#1})}}
\theoremstyle{plain}% default
\newtheorem{thm}{Theorem}[section]
\newtheorem{lem}[thm]{Lemma}
\newtheorem{prop}[thm]{Proposition}
\newtheorem{cor}[thm]{Corollary}
\theoremstyle{definition}
\newtheorem{dfn}[thm]{Definition}
\newtheorem{ex}[thm]{Example}
\newtheorem{rem}[thm]{Remark}
\newtheorem{conjecture}[thm]{Conjecture}

\newcommand{\mrm}{\mathrm}
\newcommand{\D}{\mathrm{d}}
\newcommand{\A}{\mathcal{A}}
\newcommand{\bL}{\mathbb{L}}
\newcommand{\LL}{\llangle[\Big]}
\newcommand{\RR}{\rrangle[\Big]}
\newcommand{\LD}{\Big\langle}
\newcommand{\bR}{{\mathbb{R}}}
\newcommand{\RD}{\Big\rangle}
\newcommand{\F}{\mathcal{F}}
\newcommand{\HH}{\mathcal{H}}
\newcommand{\X}{\mathcal{X}}
\newcommand{\PP}{\mathbb{P}}
\newcommand{\K}{\mathscr{K}}
\newcommand{\bK}{\mathbb{K}}
\newcommand{\q}{\mathbf{q}}

%%%%%%%%%%%%%%%%%%%%%%%%%%%%
\newcommand{\op}{\operatorname}
\newcommand{\C}{\mathbb{C}}
\newcommand{\bC}{\mathbb{C}}
\newcommand{\cC}{\mathcal{C}}
\newcommand{\N}{\mathbb{N}}
\newcommand{\R}{\mathbb{R}}
\newcommand{\bT}{\mathbb{T}}
\newcommand{\Q}{\mathbb{Q}}
\newcommand{\Z}{\mathbb{Z}}
\newcommand{\bZ}{\mathbb{Z}}
\newcommand{\NE}{\mathrm{NE}}
\newcommand{\bP}{\mathbb{P}}
\newcommand{\cO}{\mathcal{O}}

\renewcommand{\H}{\mathbf{H}}

%greeksimplified
\newcommand{\si}{\sigma}
\newcommand{\Si}{\Sigma}

\newcommand{\Etau}{\text{E}_\tau}
\newcommand{\E}{{\mathcal E}}
\newcommand{\G}{\mathbf{G}}
\newcommand{\eps}{\epsilon}
\newcommand{\g}{\mathbf{g}}
\newcommand{\im}{\op{im}}

\newcommand{\h}{\mathbf{h}}

\newcommand{\Gmax}[1]{G_{#1}}
\newcommand{\AW}{E}
%%%%%%%%%%%%%%%%%%%%%%         Functions         %%%%%%%%%%%%%%%%%%%%%%%%%%%%%%%%%%%
\providecommand{\abs}[1]{\left\lvert#1\right\rvert}
\providecommand{\norm}[1]{\left\lVert#1\right\rVert}
\newcommand{\abracket}[1]{\left\langle#1\right\rangle}
\newcommand{\bbracket}[1]{\left[#1\right]}
\newcommand{\fbracket}[1]{\left\{#1\right\}}
\newcommand{\bracket}[1]{\left(#1\right)}
\newcommand{\ket}[1]{|#1\rangle}
\newcommand{\bra}[1]{\langle#1|}

\newcommand{\ora}[1]{\overrightarrow#1}

\providecommand{\from}{\leftarrow}
\newcommand{\bl}{\textbf}
\newcommand{\mbf}{\mathbf}
\newcommand{\mbb}{\mathbb}
\newcommand{\mf}{\mathfrak}
\newcommand{\mc}{\mathcal}
\newcommand{\cinfty}{C^{\infty}}
\newcommand{\pa}{\partial}
\newcommand{\prm}{\prime}
\newcommand{\dbar}{\bar\pa}
\newcommand{\OO}{{\mathcal O}}
\newcommand{\hotimes}{\hat\otimes}
\newcommand{\BV}{Batalin-Vilkovisky }
\newcommand{\CE}{Chevalley-Eilenberg }
\newcommand{\suml}{\sum\limits}
\newcommand{\prodl}{\prod\limits}
\newcommand{\into}{\hookrightarrow}
\newcommand{\Ol}{\mathcal O_{loc}}
\newcommand{\mD}{{\mathcal D}}
\newcommand{\iso}{\cong}
\newcommand{\dpa}[1]{{\pa\over \pa #1}}
\newcommand{\Kahler}{K\"{a}hler }
\newcommand{\0}{\mathbf{0}}

\newcommand{\B}{\mathcal{B}}
\newcommand{\V}{\mathcal{V}}

%\newcommand{\HH}{\mathcal{H}}
%mathfrak
\newcommand{\M}{\mathfrak{M}}
\newcommand{\fp}{\mathfrak{p}}
\newcommand{\fg}{\mathfrak{g}}

\renewcommand{\Im}{\op{Im}}
\renewcommand{\Re}{\op{Re}}

%%%%%%%%%%%%%%%%%%%%%%%%%%%%%
\newcommand{\DD}{\Omega^{\text{\Romannum{2}}}}

%%%%%%%%%%%%%%%%%%%%%%%%%%%%%%%%%%%%Bohan's definitions
\newtheorem{assumption}[thm]{Assumption}
\newcommand{\tN}{\widetilde{N}}
\newcommand{\tM}{\widetilde{M}}
\newcommand{\bQ}{{\mathbb{Q}}}
\newcommand{\Nef}{\mathrm{Nef}}
\newcommand{\tNef}{{\widetilde{\mathrm{Nef}}}}
\newcommand{\fr}{{\mathfrak{r}}}
\newcommand{\fs}{{\mathfrak{s}}}
\newcommand{\fm}{{\mathfrak{m}}}
\newcommand{\cL}{{\mathcal {L}}}
\newcommand{\tNE}{{\widetilde{\mathrm{NE}}}}
\newcommand{\cX}{{\mathcal{X}}}
\newcommand{\bmu}{{\mathbold{\mu}}}
%%%%%%%%%%%%%%%%%%%%%%%%%%%%%%%%%%%%

%%%%%%%%%%%%%%%%%%%%%%%%%%%%%%%%
\newcommand{\ol}{\overline}
\newcommand{\lmap}{\longrightarrow}

%%%%%%%%%%%%%%%%%%%%%%%%%%%

\numberwithin{equation}{section}

%%%
\makeatletter
\newcommand{\subjclass}[2][2010]{%
  \let\@oldtitle\@title%
  \gdef\@title{\@oldtitle\footnotetext{#1 \emph{Mathematics Subject Classification.} #2}}%
}
\newcommand{\keywords}[1]{%
  \let\@@oldtitle\@title%
  \gdef\@title{\@@oldtitle\footnotetext{\emph{Key words and phrases.} #1.}}%
}
\makeatother
%%%

\makeatletter
\let\orig@afterheading\@afterheading
\def\@afterheading{%
   \@afterindenttrue
  \orig@afterheading}
\makeatother

\newcommand{\comment}[1]{\textcolor{red}{[#1]}} %for displaying red texts
\title{\bf Modularity of open-closed GW invariants of $K_{\mathbb{P}^2/\mu_3}$}
\author{Yingchun Zhang}\,
\maketitle

\begin{abstract}
Continuing our work in \cite{1805.04894}, this article is devoted to proving that open-closed Gromov-Witten invariants of $K_{\mbb{P}^2/\mu_3}$ are quasi-meromorphic modular forms, and generating functions of open Gromov-Witten invariants are quasi-meromorphic Jacobi forms.
\end{abstract}

\setcounter{tocdepth}{2} \tableofcontents

\section{Introduction}

     The modularity of Gromov-Witten generating function of Calabi-Yau 3-folds is a well-known and yet mysterious phenomenon in Gromov-Witten theory. There is  a great deal of works on this subject \cite{alim2014special,coates2018gromov,zhou2014arithmetic}. We should mention that the "modularity" here should be interpreted in a rather general sense. Among them, an attractive class of examples are toric Calabi-Yau 3-folds where an explicit  B-model is available via Eynard-Orantin topological recursion on mirror curve. This mirror symmetry is remodeling conjecture \cite{eynard2007invariants, bouchard2009remodeling, bouchard2010topological} proved in \cite{fang2016remodeling}. There are 16 local toric surfaces where the mirror curve is genus one. Naturally, one expects its B-model (and hence A-model via remodeling conjecture) to have modularity in the classical sense. 
    
    In \cite{1805.04894}, we have investigated four examples $K_{\mbb{P}^2},\,K_{\mbb{P}^1\times\mbb{P}^1},\,K_{\mbb{P}[1,1,2]},\,K_{\mbb{F}_1}$. Among many things, we showed that its open-closed Gromov-Witten generating functions $F_{g,n}^{\mc{X},\mc{L},f}$ are quasi-meromorphic Jacobi forms up to a mirror map.
   Three of four examples have more than one K$\mrm{{\ddot a}}$hler classes and its generating function is of multi-parameters. Therefore, we need to restrict $F_g$ to some one-dimensional subfamilies. One interesting phenomenon is the multiple-choices of one-dimensional sub-families and the restriction to different one-dimensional subfamily of generating functions may have different modular groups! In this article, we continue our investigation to the example $K_{P^2/\mu_3}$. In particular, we prove
    
\begin{thm}(Theorem 4.5)
\begin{enumerate}
\item 
$d_{\tilde{X}_1}\cdots d_{\tilde {X}_n}\mc{F}_{(g,n)}^{\mc{X},\mc{L}}$ under mirror map is in the ring $\widehat{\mc{J}}(\Gamma)\otimes(H^*(\mc{B}\mu_3,\mbb{C}))^{\otimes n}$, where 
\begin{equation}
\widehat{\mc{J}}(\Gamma)=\widehat{\mc{R}}\left[\{\wp'(u_i-u_r),\wp(u_i-u_r)\}_{i\in\{1,\cdots,n\},r\in R}\right]\,,
\end{equation}
is the differential ring generated by $\wp'(u_i-u_r)$ and $\wp(u_i-u_r)$,
and 
 \begin{equation}
\widehat{\mc{R}}=\mc{M}(\Gamma)\left[\{\wp(u_{r_i}-u_{r_j})\}_{r_i.r_j\in R},\eta_1\right]\,,
\end{equation}
which is the ring of quasi-meromorphic modular forms.
\item
 $\mc{F}_{g}^{\mc{X}}$ is quasi-meromorphic modular form of some group $\Gamma$ under closed mirror map. 
 Group $\Gamma$ depends on the one-dimensional subfamily we choose.
\end{enumerate}
\end{thm}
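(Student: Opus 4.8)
The plan is to transport the statement to the B-model via the remodeling conjecture \cite{fang2016remodeling} and then analyze the Eynard-Orantin correlation differentials $\omega_{g,n}$ of the mirror curve, which for $K_{\mbb{P}^2/\mu_3}$ is an elliptic curve $E_\tau$. Note first that the object $d_{\tilde X_1}\cdots d_{\tilde X_n}\mc{F}_{(g,n)}^{\mc{X},\mc{L}}$ is, up to the open mirror map, exactly the differential $\omega_{g,n}$, so part (1) is equivalent to showing $\omega_{g,n}$ lies in the asserted ring. I would fix a uniformization $E_\tau=\mbb{C}/(\mbb{Z}+\tau\mbb{Z})$ and write every differential in the flat coordinate $u$: the branch points of the Hori-Vafa projection give the finite set $R$ of ramification points $\{u_r\}_{r\in R}$, while the positions of the Aganagic-Vafa brane insertions give the open variables $\{u_i\}$. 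The key analytic input is that the Bergman kernel (fundamental bidifferential) on $E_\tau$ is, up to the normalization constant carrying the quasi-modular generator $\eta_1$, equal to $\bigl(\wp(u_1-u_2)+\eta_1\bigr)\,du_1\,du_2$. This places $B$ inside $\widehat{\mc{J}}(\Gamma)$ from the outset and makes the quasi-modular anomaly explicit through $\eta_1$, which plays the role of $E_2$.

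The heart of the argument is an induction on $2g-2+n$ showing that topological recursion preserves the ring. The base cases $\omega_{0,3}$ and $\omega_{1,1}$ lie in $\widehat{\mc{J}}(\Gamma)$ by direct computation from the recursion kernel and the $\wp$-expression for $B$. For the inductive step, the Eynard-Orantin formula produces $\omega_{g,n}$ as a sum of residues at the ramification points $u_r$ of lower differentials paired against the recursion kernel. I would argue that (i) each residue at $u_r$ generates only derivatives of $\wp(u_i-u_r)$ and $\wp(u_{r'}-u_r)$, all of which remain in $\widehat{\mc{J}}(\Gamma)$ via the addition and derivative formulas for $\wp$, and (ii) the sole transcendental constant appearing is $\eta_1$, so the scalar coefficients stay in $\widehat{\mc{R}}=\mc{M}(\Gamma)[\{\wp(u_{r_i}-u_{r_j})\}_{r_i,r_j\in R},\eta_1]$. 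Closure of $\widehat{\mc{R}}$ under differentiation in $\tau$, governed by the Ramanujan-type relations with $\eta_1$ playing the part of $E_2$, is what keeps the ring stable through every recursion step; this establishes part (1), with the tensor factors $(H^*(\mc{B}\mu_3,\mbb{C}))^{\otimes n}$ recording the twisted-sector decomposition of the orbifold brane insertions.

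Part (2) then follows by specializing part (1) to the closed sector. Removing the open insertions deletes every $u_i$-dependent generator, leaving the free-energy coefficient $\mc{F}_g^{\mc{X}}$ inside $\widehat{\mc{R}}$, i.e.\ a quasi-meromorphic modular form. The final step in both parts is to pass from the B-model modulus back to the A-model flat coordinate through the closed, respectively open, mirror map; since the mirror map is itself modular for the chosen lattice, composing with it preserves quasi-meromorphic modularity and yields the phrasing ``under mirror map.''

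I expect the main obstacle to be the determination of the group $\Gamma$ itself, and in particular its dependence on the chosen one-dimensional subfamily, which is the new phenomenon flagged in the introduction. Because $K_{\mbb{P}^2/\mu_3}$ carries more than one \Kahler parameter, restriction to a one-parameter slice imposes a particular congruence condition on $E_\tau$, and different slices select different congruence subgroups. Pinning down $\Gamma$ for each slice, and checking that the branch-point configuration $\{u_r\}_{r\in R}$ together with the $\mu_3$-symmetry is compatible with that $\Gamma$, is the delicate point, whereas the ring-closure induction above is essentially formal once the $\wp$-expression for the Bergman kernel and the Ramanujan relations are in hand.
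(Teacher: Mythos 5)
Your overall strategy coincides with the paper's: reduce to the Eynard--Orantin differentials $\omega_{g,n}$ via the remodeling conjecture, write the kernel as $\bigl(\wp(u_1-u_2)+\eta_1\bigr)\,du_1\,du_2$, and run an induction through the recursion to show that the ring $\widehat{\mc{J}}(\Gamma)$ is preserved. (The paper actually works with the Schiffer kernel, $\hat\eta_1=\eta_1-\pi/\mathrm{Im}\,\tau$, so that every intermediate object is an honest almost-meromorphic Jacobi form, and only recovers $\eta_1$ by the limit $\mathrm{Im}\,\tau\to\infty$ at the end; your direct use of $\eta_1$ is a cosmetic difference.) But there is a genuine gap: the step you defer as a final ``delicate point'' --- checking that the ramification configuration $\{u_r\}_{r\in R}$ is compatible with $\Gamma$ --- is not peripheral, it is the central new content of this example, and without it the theorem is unproved. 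Since the mirror curve of $K_{\mbb{P}^2/\mu_3}$ is not hyperelliptic, the ramification points are not $2$-torsion, so $\wp(u_r)$, $\wp'(u_r)$ and $\wp(u_{r_i}-u_{r_j})$ are not automatically modular; yet these are precisely the coefficients generating your ring $\widehat{\mc{R}}$, so if their modularity is not established the induction shows nothing about modularity. The paper settles this by restricting to the subfamily $X^3+Y^3+Y^6+qXY^3=0$ with $q^3$ a Hauptmodul of $\Gamma_0(3)$, solving the ramification equations explicitly (Lemma 3.8, Lemma 3.9) and invoking the addition formula for $\wp$ (Lemma 3.10). The same issue enters the recursion kernel itself: the denominator $\lambda-\lambda^*$ must be shown to have a double zero at each $r$ with modular expansion coefficients (Lemma 4.2), and the involution $q\mapsto q^*$ exists only locally near $R$ here, so the derivatives $\partial u_{q^*}/\partial u_q$ at $r$ appearing in the expansion of $d^{-1}S$ (Lemma 4.1) again require the modularity of the $X,Y$-coordinates of the ramification points.

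Two smaller corrections. The mechanism you cite for ring stability --- closure of $\widehat{\mc{R}}$ under $\tau$-differentiation via Ramanujan-type relations --- is not what the recursion uses: the residues at $u_r$ are computed by expanding in the local coordinate $u-u_r$, so the relevant closure is under $u$-differentiation, which follows from $\wp''=6\wp^2-\tfrac{1}{2}g_2$ and keeps everything polynomial in $\wp(u_i-u_r),\ \wp'(u_i-u_r)$ with coefficients in $\widehat{\mc{R}}$; no $\tau$-derivative occurs in the Eynard--Orantin step. And part (2) is not obtained by ``deleting the open insertions'': $\mc{F}_g$ requires one further residue, $\frac{1}{2g-2}\sum_{r\in R}\mathrm{Res}_{q\to r}\,d^{-1}\lambda\cdot\omega_{g,1}$, although this does land in $\widehat{\mc{R}}$ for the same reasons.
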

    
    The main reason for the choice of our first four examples in \cite{1805.04894} is the existence of global hyper-elliptic structure of its mirror curve. Hyper-elliptic structure  is crucial in our argument because its ramification points are 2-torsion points 
    and the coefficients of expansion of $\wp$-function at these points are automatically
   modular forms for $\Gamma(2)$.  The mirror curve of our example now is not globally hyper-elliptic and we do not know if the ramification points are torsion points, but we can still get modularity property similarly. The main reason is ramification points are intersection of 2 curves with coefficients in function field ${\mc{H}/\Gamma}$, and the intersection points then are modular functions of some smaller group $\Gamma'<\Gamma$. In this article, we choose special subfamily to simplify the computation. 
    
    The article is organized as follows. In the section 2, we will briefly review construction of mirror curve and remodeling conjecture in this case. The appropriate one-dimensional subfamily is constructed in the section 3. The main results are proved 
    in the section 4. In many ways, this article is a continuation of \cite{1805.04894}. We were informed that a general treatment for remaining local toric surfaces are prepared in a up-coming work of Fang-Liu-Zong.

\section*{Acknowledge}
I would like to express my thanks to Prof. Bohan Fang for discussion about remodeling conjecture, to Prof. Jie Zhou for his great patience on B model discussion and for many good suggestions and ideas, and also to Prof. Yongbin Ruan for his support and useful discussion. This article is partially supported by the China Scholarship Council.
\section{Revisit Remodeling Conjecture}
\subsection{Mirror curve construction}
The main reference of this subsection is \cite{fang2016remodeling}. Let $\mc{X}=K_{\mbb{P}^2/\mu_3}$, defined by a fan $\Sigma\subseteq N_{\mbb{R}}$. $N=\mbb{Z}^3$ is a lattice of rank 3. $\Sigma(1)=\{\rho_1,\cdots,\rho_{p'+3}\}$ is set of 1-dimensional cone of $\Sigma$, and $b_i\in \rho_i\cap \mbb{Z}$ is integral generator of $\rho_i$. Since $\mc{X}$ is Calabi-Yau, all $b_i\,,i=1,\cdots,p'+3$ lie in a hyperplane of $N$, and assume it to be $N':=\mbb{Z}^2\times\{1\}\subseteq N$, so $b_i=(m_i,n_i,1)$. After change of basis of $N'$, assume that $b_1=(r,-s,1),\,b_2=(0,m,1),\,b_3=(0,0,1)$. Let $\Delta=\Sigma\cap N'_{\mbb{R}}$ be a polytope and $\Delta\cap N'={(m_i,n_i),i=1,\cdots,p+3}$, where $p+3=|\Delta\cap N'|$. $\Delta$ is defining polytope of $\mc{X}$. $\tilde{b}_i=(m_i,n_i,1)\,, i=1,\cdots, p+3$ and $\tilde b_i=b_i$ for $i=1,\cdots,p'+3$. Let $\si_0$ is a 3-dimensional cone with $b_1,\,b_2,\,b_3$ as edges, and $\tau_0$ is a 2-dimentional subcone with $b_2,\,b_3$ as edges. As is shown in below picture $\si_0$ is the cone over shadowed triangle and $\tau_0$ is the 2 dimensional cone over the vertical edge of $\si_0$.

Trianglized defining polytope of $K_{P^2/\mu_3}\,$,\\
\begin{center}
\begin{tikzpicture}{definingpolytopeofmirrorP2}
\draw [dotted, ->] (-0.5,0) -- (3.5,0); 
\draw [dotted, ->] (0,-3.5) -- (0,3.5); 
\filldraw (0,0)circle(2pt) node[left]{$b_3$}--(1,-1)circle(2pt)--(2,-2) circle(2pt)--(3,-3)circle(2pt) node[below]{(3,-3)}--(2,-1)circle(2pt)--(1,1)circle(2pt)--(0,3)circle(2pt) node[right]{$b_2$}--(0,2)circle(2pt)--(0,1) circle(2pt)--(0,0)circle(2pt);
\draw[thick, fill=black!10] (1,0) -- (0,3) -- (0,0) -- cycle;
\filldraw (1,0)circle(2pt)--(0,0);
\filldraw(1,0)circle(2pt)--(3,-3);
\filldraw (1,0)circle(2pt)--(0,3);
\filldraw [dashed](1,0)circle(2pt) node[left]{$b_1$}--(0,2);
\filldraw [dashed](1,0)--(0,1);
\filldraw [dashed](1,0)--(1,-1);
\filldraw [dashed](1,0)--(2,-2);
\filldraw [dashed](1,0)--(2,-1);
\filldraw [dashed](1,0)--(1,1);
\end{tikzpicture}
\end{center}
Let
\begin{equation}
\widetilde{N}=\bigoplus_{i=1}^{p+3}\mbb{Z}\tilde{b}_i
\end{equation}
and there is a exact sequence of group homomorphism,
\begin{equation}\label{exactsequence}
0\longrightarrow \mbb{L}\overset{\phi}\longrightarrow \widetilde{N}\overset{\psi}{\longrightarrow} N\longrightarrow 0\,,
\end{equation}
where $\mbb{L}\cong \mbb{Z}^p$.
Applying tensor product by $\mbb{C}^*$ to \eqref{exactsequence},
\begin{equation}
1\longrightarrow G\longrightarrow \widetilde{T}{\longrightarrow} T\longrightarrow 1\,.
\end{equation}
Action of $\widetilde{T}$ on itself extends to $\mbb{C}^{p+3}=Spec[Z_1,\cdots,Z_{p+3}]$. Let $Z(\Sigma)=\bigcup_{\sigma\in \Sigma(3)}\{Z_i=0,\, \mrm{if}\,\, \rho_i\nsubseteq \sigma\}$, $U(\Sigma)=\mbb{C}^{p+3}-Z(\Sigma)$. Then
\begin{equation}
\mc{X}=[U(\Sigma)/G]\,.
\end{equation}
Applying $\mrm{Hom}(-,\mbb{Z})$ to \eqref{exactsequence}, we have a long exact sequence, 
\begin{equation}
0\longrightarrow M\overset{\psi^\vee}\longrightarrow \widetilde{M} \overset{\phi^\vee}\longrightarrow \mbb{L}^\vee\longrightarrow 0\,.
\end{equation}
Assume $\{\tilde{b}_i^{\vee},i=1,\cdots,\,p+3\}$ is a set of basis of $\widetilde{M}$
and $D_i=\phi^{\vee}(\tilde{b}_i^{\vee})$.
Denote 
\begin{equation}
I'_{\sigma}=\{i\in \{1,\cdots,p'+3\},\,st.\,\rho_i\in \sigma\}\,, 
I_{\sigma}=\{1,\cdots,p+3\}\setminus I'_{\sigma}\,.
\end{equation}
Extended nef-$\si$ cone is defined as, 
\begin{equation}
\widetilde{Nef}_{\si}:=\sum_{i \in I_\si}\mbb{R}_{\geqq 0}D_i\,,
\end{equation}
and extended nef cone of $\mc{X}$ is
\begin{equation}
\widetilde{Nef}(\Si):=\bigcap_{\si\in \Si(3)}\widetilde{Nef}_{\si}\,.
\end{equation}
Take a subset $\{H_i,\,i=1\cdots,p\}$ of $\widetilde{Nef}\cap \mbb{L}^{\vee}$ which is a basis of $\mbb{L}_{\mbb Q}^{\vee}$, and require that image of $\{H_i,\,i=1\cdots,p'\}$ in $H^2(\mc{X},\mbb{Q})$ forms a set of basis. Let $H_i=D_{i+3},\, \mrm{for}\, i=p'+1,\cdots, p$.
 
Assume
\begin{equation}
H_a=\sum_{i\in I_{\si_0}}s_{aj}D_j\,,
\end{equation}
and define a monomial of $q=(q_1,\cdots,q_p)$

\begin{align}
a_i=\begin{cases}
1,&i\in I'_{\si_0}\\
\prod_{a=1}^pq_a^{s_{ai}},&i\in I_\si
\end{cases}\,.
\end{align}
Then the miror curve, denoted by $\mc{C}_q$, has equation 
\begin{equation}\label{toriccoordmirrorcurve}
X+Y^3+1+a_1X^3Y^{-3}+a_2XY^{-1}+a_3X^2Y^{-2}+a_4X^2Y^{-1}+a_5XY+a_6Y^2+a_7Y=0\,,
\end{equation}
with
\begin{align}
a_1=q_1^3, a_2=q_1q_2, a_3=a_1^2q_3, a_4=q_1^2q_4, a_5=q_1q_5, a_6=q_6, a_7=q_7.
\end{align}
The compactification of mirror curve $\overline{\mc{C}}_q$ is the natural compactification when embedding $\mc{C}_q \hookrightarrow \mbb{P}_{\Delta}$. Let $\mc{U}$ be moduli space of complex structure of $\mc{C}_q$ over which $\mc{C}_q$ and $\overline{\mc{C}}_q$ are smooth. 
\subsection{Aganagic-Vafa brane}
Let $T'=N'\otimes \mbb{C}^*$ and $T'_{\mbb{R}}$ be maximal compact subgroup of $T'$. From the construction of toric Calabi-Yau orbifold, action of $T'_{\mbb R}$ preserves the Calabi-Yau structure of $\mc{X}$. The corresponding moment map of this Hamiltonian action is
\begin{equation}
\mu_{T'_{\mbb R}}:\,\mc{X}\mapsto M'_{\mbb R}\,.
\end{equation}
Let $\mc{X}^1$ be the union of $T'_{\mbb{R}}$ invariant 0-dimensional and 1-dimensional orbits. The image $\mu_{T'_{\mbb R}}(\mc{X}^1)$ is called toric graph. Choose a point in one outer leg of toric graph, then $\mc{L}\subset \mu_{T'_{\mbb R}}^{-1}(p)$, with an additional condition that when we write $\mc{X}$ as GIT quotient $\mbb{C}^{p+3}\sslash(\mbb{C}^*)^p$, then sum of angles of $Z_i$ on $\mc{L}$ is constant.  $\mc{L}$ is called outer Aganagic-Vafa brane. Assume the 2-dimensional cone corresponding to the outer leg is the one considered in last section $\tau_0$. In our example, $\mc{L}\cong \left[\mbb{C}\times S^1/\mu_3\right]$.

\subsection{Open-closed Gromov-Witten invariants}
Denote $\beta\in H_2(\mc{X},\,\mc{L};\,\mbb{Z})$, and $\vec{\mu}=(\bar{\mu}_1,\cdots,\bar{\mu}_n)\in H_1(\mc{L},\mbb{Z})^n$, In our example, $H_1(\mc{L},\mbb{Z})\cong \mbb{Z}\times\mu_3$, so $\bar{\mu}_i=(d_i,k_i)$. 
\begin{eqnarray}
\overline{\mc{M}}_{(g,h),n}(\mc{X},\mc{L},|\,\beta,\vec{\mu}):=\Big\{(\Si,\partial \Si)\overset{\phi}\longrightarrow (\mc{X},\mc{L})\,|\, \phi\, \mrm{stable}, \phi_{*}([\Si])=\beta, \phi_{*}([\partial(\Si)])=\vec{\mu} \Big\}\,,
\end{eqnarray}
is moduli space of stable maps from a Rieman surface of $g$ holes $n$ boundaries and $h$ markings to $(\mc{X},\mc{L})$.
\begin{equation}
ev_i: \overline{\mc{M}}_{(g,h),n}(\mc{X},\mc{L},|\,\beta,\vec{\mu}) \mapsto \mc{IX}\,,
\end{equation}
is evaluation map.
$\gamma_1,\cdots,\gamma_h\in H_{T',\,CR}^*(\mc{X},\mbb{Q})$, then open Gromov-Witten invariant is defined to be 
\begin{equation}\label{gwinvariants}
<\gamma_1,\cdots,\gamma_h>_{(g,h),n}^{\mc{X},\mc{L},{T'_{\mbb{R}}}}:=\int_{[F]^{vir}}{\prod_{i}ev_i^*(\gamma_i)\over N_F^{vir}}\,,
\end{equation}
where $F$ is the toric $T'_{\mbb{R}}$ fixed loci. 
Then \eqref{gwinvariants} lies in $\mbb{Q}(w_1,w_2)$, fractional field of $H^*(\mc{B}T'_{\mbb{R}},\mbb{Q})$. Set $w_1=1,\,w_2=f$, then \eqref{gwinvariants} is $\mbb{Q}$-valued and denoted by $<\gamma_1,\cdots,\gamma_h>_{(g,h),n}^{\mc{X},(\mc{L},f)}$. Let $\textbf{T}=\sum_{i=1}^p t_iT_i$, and $\{T_i\}_{i=1}^p$ is a set of basis of $H^2_{CR}(\mc{X},\mbb{Q})$.
Open-closed potential is defined in \cite{fang2012open} as 
\begin{align}
\mc{F}_{g,n}^{\mc{X},(\mc{L},f)}\left(\textbf{T},\tilde{X}_1,\cdots,\tilde{X}_n\right)=\sum_{d\in Eff(\mc {X})}\sum_{d_1,\cdots,d_n>0}\sum_{k_1,\cdots,k_n=0}^{2}&\sum_{h=0}^{\infty}{{<}\textbf{T}^h{>}_{(g,h),n}^{\mc{X},(\mc{L},f)}\over h!}\bullet \nonumber \\ &\prod_{j=1}^{n}\tilde{X}_j^{n_j}\bigotimes_{i=1}^n\left(-(-1)^{-k_1\over 3}\right)\textbf{1}'_{-k_i\over 3}
\end{align}
It takes value in $H^*_{CR}\left(\mc{B}\mu_m,\mbb{C}\right)$. When $n=0$, it recovers closed Gromov-Witten potential which is denoted by $\mc{F}_g^{\mc{X}}$.

\subsection{Statement of remodeling conjecture}
Remodeling conjecture, proposed in \cite{bouchard2009remodeling, bouchard2010topological}, relates A-model open-closed Gromov-Witten invariants with B-model Eynar-Orantin topological recursion invariants under mirror map proved in \cite{fang2016remodeling}.

Consider the mirror curve constructed before. Assume $p:\widetilde{\mc{C}}_q\longrightarrow \mc{C}_q$, is the $\mbb{Z}^2$ cover map. Then closed mirror map is given by 
\begin{eqnarray}
t_i=\int_{\tilde{A}_i}logY{d\hat{X}\over \hat X},\,i=1,\cdots,p\,,
\end{eqnarray}
and 
\begin{equation}
\bar A_i=p_*(\tilde A_i),\,i=1,\cdots,p\,
\end{equation}
lie in $K(\mc{C}_q,\mbb{Z})$, where
 \begin{equation}
 K(\mc{C}_q)=\mrm{kernel}(H_1(\mc{C}_q,\mbb{Z})\mapsto H_1((\mbb{C}^*)^2),\mbb{Z})\,.
 \end{equation}
$\{A_i\}_{i=1}^g$ is the image of $\{\bar A_i\}$ in $H_1(\overline{\mc{C}}_q,\mbb{Z})$.
Choose $\{B_i\}_{i=1}^g\subset H_1(\overline{\mc C}_q,\mbb{Z})$, st.
\begin{equation}
A_i\cap B_j=\delta_{ij}\,,\, \,A_i\cap A_j=B_i\cap B_j=0\,.
\end{equation}

Bergman kernel $B(p,q)$ normalized by $\{A_i\}$ is a meromorphic symmetric bidifferential 2-form on $\overline{\mc{C}}_q\times\overline{\mc{C}}_q$ characterized by
\begin{itemize}
\item B(p,q) is holomorphic except that it admits 2 order of pole on diagnal.
\begin{equation}
B(p,q)=\left({1\over (p-q)^2}+f(p,q)\right)dpdq\,,
\end{equation}
in which $f(p,q)$ is symmetric.
\item
\begin{equation}
\int_{q\in A_i}B(p,q)=0,\,i=1,\cdots,g.
\end{equation}
\end{itemize}
Choose $\omega_i\in H^0(\overline{\mc C}_q,\omega_{\overline{\mc C}_q})$, satisfying
\begin{equation}
\int_{A_j}\omega_i=\delta_{ij}, \int_{B_j}\omega_i=\tau_{ij}\,.
\end{equation}
In our example, the compactified mirror curve is genus 1, so $g=1$. 
Choose a base point $o$ and we have Abel-Jacobi map,
\begin{eqnarray}
u: \overline{\mc{C}}_q &\mapsto &\mc{J}(\overline{\mc{C}}_q)=\mbb{C}/{\mbb{Z}\oplus\tau\mbb{Z}}\,,\nonumber\\
    p & \mapsto & u(p)=\int_{o}^p\omega \,.
\end{eqnarray}
Then the Bergman kernel has an expression in Jacobi coordinate 
\begin{equation}
B(p,q)=\left(\wp(u(p)-u(q),\tau)+\eta_1\right)du(p)du(q)\,,
\end{equation}
with $\eta_1={\pi^2\over 3}E_2$.

We use Schiffer kernel $S(p,q)$ as anti-holomorphic completion of Bergman kernel, such that it is independent the choice of Torelli marking and can be extended to the whole moduli space of complex structure. The method to do antiholomorphic completion is explained in \cite{eynard2007invariants}. 
\begin{equation}
S(p,q)=\left(\wp(u(p)-u(q),\tau)+\hat{\eta}_1\right)du(p)du(q)\,,
\end{equation}
where $\hat{\eta}_1=\eta_1-{\pi \over Im\tau}$, and 
\begin{equation}
\lim_{Im\tau\to \infty}S(p,q)=B(p,q)\,.
\end{equation}

Let $R=\{p\in \mc{C}_q, d(XY^f)=0\}$ be set of ramification points in mirror curve. 
$\omega_{0,1}=0,\,\,\widehat{\omega}_{0,2}=S(p,q)$, then for $2g-2+n>0$,

\begin{align}\label{eorecursionrelation}
\widehat{\omega}_{g,n}^f(p_1,p_2\cdots p_n)=
&\sum_{r\in R}Res_{q\to r}{{1\over 2}\int_{q^*}^q S(p_1,\xi)\over \lambda-\lambda^*}
\Bigg\{ \widehat{\omega}_{g-1,n+1}^f(q,q^*,p_1\cdots,p_n)\nonumber\\
&+\sum_{I\cup J=\{2,\cdots,n\}}\sum_{g_1+g_2=g}\widehat{\omega}_{g_1,|I|+1}^f(q,p_I)\widehat{\omega}_{g_2,|J|+1}^f(q^*,p_J)
\Bigg\}\,.
\end{align}

$\hat{X}:=XY^f: \overline{\mc{C}}_q \mapsto \mbb{C}$. $\nu_l\in\hat{X}^{-1}(0),\,l=0,\,1,\,2$ in boundary of $\mc{C}_q$ are open GW points, and assume $Y_l=Y(\nu_l)$ . In a neighborhood $\nu_l\in D_q^l\subset \ol{\mc{C}}_q$, assume $\hat{X}(D_q^l)\subset D_{\delta}=\{\hat{X}\in \mbb{C}:\hat{X}\leq \delta\}$.
Then there are maps 
\begin{equation}
\rho_q^{l_1,\cdots,l_n}:=(\hat{X}_q^{l_1})^{-1}\times\cdots\times(\hat{X}_q^{l_n})^{-1}: (D_{\delta})^n\mapsto D_q^{l_1}\times\cdots\times D_q^{l_n}\,.
\end{equation}
$(\rho_q^{l_1,\cdots,l_n})^*\widehat{\omega}_{g,n}^f(p_1,\cdots,p_n)$ means expanding $\widehat{\omega}_{g,n}^f$ near the open Gromov point $({\nu}_{l_1},\cdots,\nu_{l_n})$.
\\
Assume the corresponding $u$ coordinate of the open GW points are $w_l,\,l=1,2,3$, and  $w_l\in\widetilde{D}^l$ is a neighborhood of $w_l$ in $\mc{J}(\overline{\mc{C}}_q)$. and 
\begin{equation}
\delta^{l_1,\cdots,l_n}:=(u^{l_1})^{-1}\times\cdots\times(u^{l_n})^{-1}: (D_{\delta})^n\mapsto \widetilde{D}^{l_1}\times\cdots\times \widetilde{D}^{l_n}\,.
\end{equation}

Let $H_{CR}^*(\mc{B}\mu_3,\mc{C})=\mc{C}\textbf{1}\oplus\mc{C}\textbf{1}_{1\over 3}\oplus\mc{C}\textbf{1}_{2\over 3}$, and $\psi_l={1\over 3}\sum_{k=0}^{2}\xi_{3}^{-kl}\textbf{1}'_{k\over 3}$, $l=0,1,2\,$.

\iffalse
B-model potential is constructed in \cite{fang2016remodeling} as
\item 1. Disk potential,
\begin{equation}
\check F_{0,1}^f(q,\hat X):=\sum_{l=0}^2\int_{0}^{\hat X}(-logY(\rho_q^l(X'))+log-logY(\rho_q^l(0)))(-{dX'\over X'})\cdot \psi_l
\end{equation}
\item 2. Annulus potentential
\begin{equation}
\check F_{0,2}^f(q,\hat X_1,\hat X_2):=\sum_{l_1,l_2=0}^2 \int_{0}^{\hat{X}_1}\int_{0}^{\hat X_2}(\rho_q^{l_1l_2})^*\omega_{0,2}\cdot \psi_{l_1}\otimes\psi_{l_2}.
\end{equation}
\item 3. For $2g-2+n>0$, 
Then under open-closed mirror map,
\begin{equation}
\check F_{g,n}^f(q;\hat X_1,\cdots ,\hat X_n)=(-1)^{g-1+n}\cdot 3^n\cdot \mc{F}_{g,n}^{\mc{X},(\mc{L},f)}(\textbf{T};\tilde X_1,\cdots ,\tilde X_n)
\end{equation}
Due to this mirror theorem, we can achieve A model open-closed Gromov-Witten invariants by studying the modularity of Eynard Orantian topological recursion invariants.\\
\fi

Take limit $Im\tau \to \infty$, then we recover $\omega_{g,n}^{f}$.
The remodeling conjecture applied to modularity purpose says that
\begin{itemize}
\item For $2g-2+n>0$,
\begin{equation}\label{mirrortheorem}
d_{\tilde{X}_1}\cdots d_{\tilde {X}_n}\mc{F}_{(g,n)}^{\mc{X},(\mc{L},f)}={(-1)^{g-1+n}\over 3^{n}}\sum_{l_1,\cdots,l_n \in \{1,2,3 \}}(\delta^{l_1\cdots l_n})^*\omega_{g,n}^f(u_1,\cdots,u_n)\psi_{l_1}\otimes\cdots\otimes\psi_{l_n}\,,
\end{equation}
\item
\begin{equation}
d_{\tilde{X}_1}d_{\tilde {X}_2}\mc{F}_{(0,2)}^{\mc{X},(\mc{L},f)}=-3^{-2}\sum_{l_1,l_2\in \{1,2,3\}}\left((\delta^{l_1l_2})^*\omega_{(0,2)}-{du_1du_2\over (u_1-u_2)^2}\right)\psi_{l_1}\psi_{l_2}\,,
\end{equation}
\item
\begin{equation}
d_{\tilde X}\mc{F}_{(0,1)}^{\mc{X},(\mc L,f)}={1\over 3}\sum_{l\in \{1,2,3\}}\left(\ln Y(\rho_l{\hat X})-\ln Y_l\right){d\hat X\over \hat X}\psi_l\,,
\end{equation}
\item 
\begin{equation}
\mc{F}_g^{\mc{X}}=\widehat{\mc{F}}_{g}:={1\over 2g-2}\sum_{r\in R}Res_{q\to r}d^{-1}\lambda\cdot\omega_{g,1}\,.
\end{equation}
\end{itemize}

The $"="$ means that under mirror map and right-hand side expanded near open GW points, two sides are equal.

\section{A special subfamily of mirror curve} 
Although compared with the example in \cite{1805.04894}, mirror curve of $K_{\mbb{P}^2/\mu_3}$ has no hyperelliptic structure, and the ramification points are not necessary to be torsion points, we can still get similar results. The reason is that in Eynard-Orantin recursion, the modularity of ramification points plays a crucial important role, and in this example, we can still get the modularity property of ramification points. 

In this section, I will focus on a concrete and easy subfamily to simplify computation and highlight the idea. 

Take 1-dimensional subfamily by letting
\begin{equation}
q_{i}=0\,,\,\mrm{for} \,\,i=2,\cdots,7\,,
\end{equation}
and $q={1\over q_1}$, and then the mirror curve subfamily becomes
\begin{align}\label{subfamilycurve}
H(X,Y):=X^3+Y^3+Y^6+qXY^3=0\,.
\end{align}
Denote this subfamily by $\mc{C}$.
In this subfamily, it is enough to choose framing $f=0$. Denote $X=\hat X$ that appears in section 2.

\subsection{Primary definition of modular form}
In this subsection, I will introduce some basic and necessary definition of modular forms. See \cite{diamond2005first}.
\begin{dfn} A holomorphic function $f:\mc{H}\mapsto \mbb{C}$ is a modular form of weight $k\in \mbb{Z}$ for group $\Gamma\subset SL(2,\mbb{Z})$ if\\
 (1) for any $\begin{pmatrix}a&b\\c&d \end{pmatrix}\in \Gamma$, 
\begin{equation}
f\left({a\tau+b\over c\tau+d}\right)=(c\tau+d)^kf(\tau)\,.
\end{equation}
 (2)
$f(\tau)$ is holomorphic at $\infty$ .
Denote the ring to be $M_k(\Gamma)$.
\end{dfn}
\begin{ex}
Weight 4 and 6 Eisenstein modular form 
\begin{eqnarray}
&&E_4={1\over 4\zeta(4)}{\sum_{c\in \mbb{Z}}\sum_{d\in \mbb{Z}}}'{1\over (c\tau+d)^4}\,,\nonumber\\
&&E_6={1\over 6\zeta(6)}{\sum_{c\in \mbb{Z}}\sum_{d\in \mbb{Z}}}'{1\over (c\tau+d)^6}\,.
\end{eqnarray}
\end{ex}

\begin{dfn} A holomorphic function $f:\mc{H}\longrightarrow \mbb{C}$ is a quasi-modular form if\\
(1)
\begin{equation}
f\left({a\tau+b\over c\tau+d}\right)=(c\tau+d)^kf(\tau)+\sum_{i=0}^{k-1}c^k(c\tau+d)^{k-1}f_j\,,
\end{equation}
$f_j$ is holomorphic function.\\
(2) 
$f$ is holomorphic at $\infty$. Denote this ring to be $\widehat M_k(\Gamma)$.
\end{dfn}

\begin{ex}
An example of quasi-modular form is weight 2 Eisenstein series.
\begin{equation}
E_2={1\over 2\zeta(2)}{\sum_{c\in \mbb{Z}}\sum_{d\in \mbb{Z}}}'{1\over (c\tau+d)^2}\,.
\end{equation}
\end{ex}

\begin{dfn} Almost holomophic modular form of weight k is an antiholomorphic function $f:\mc{H}\longrightarrow \mbb{C}$ satisfying,\\
(1) $f$ behaves like modular forms under group action
\begin{equation}
f\left({a\tau+b\over c\tau+d}\right)=(c\tau+d)^kf(\tau)\,.
\end{equation}
(2) $f(\tau)$ can be expanded as polynomial of ${1\over Im\tau}$, with holomorphic function as coefficients. Denote the ring of almost holomorphic modular forms by $\widetilde M_k(\Gamma)$. 
\end{dfn}
For example $\widehat{E}_2=E_2-{3\over \pi Im\tau}$ is an almost holomorphic modular form.

Let $\mc{M}(\Gamma)$ denote the ring of meromorphic modular forms which is fractional field of $M(\Gamma)$, and then we have ring of quasi-meromorphic modular forms $\widehat{\mc{M}}(\Gamma)=\mc{M}(\Gamma)[E_2]$ and ring of almost meromorphic modular forms $\widetilde{\mc{M}}(\Gamma)=\mc{M}[\hat E_2]$.

In this article, we will deal with modular forms with non-trial multiplier system because of $\eta$ function which is an modular form with non-trivial multiplier system, so modular form in this article means modular form with multiplier system.

\begin{dfn}Jacobi form of weight k index m of a group $\Gamma$ is a holomorphic  function $\phi:\mbb{C}\times\mc{H}\longrightarrow \mbb{C}$ satisfying\\
(1)
\begin{equation}
\phi\left({z\over c\tau+d},{a\tau+b\over c\tau+d}\right)=e^{2\pi i m c z^2\over c\tau+d}(c\tau+d)^k\phi(z,\tau)\,.
\end{equation}
(2) For $\lambda\,,\mu\in \mbb{Z}$,
\begin{equation}
\phi(z+\lambda\tau+\mu,\tau)=e^{-2\pi i m(\lambda^2\tau+2\lambda Z)}\phi(z,\tau)\,.
\end{equation}
(3) $\phi$ has Fourier expansion
\begin{equation}
\phi(z,\tau)=\sum_{n,r}C(n,r)e^{2\pi i(n\tau+rz)}\,,
\end{equation}
$C(n,r)=0$ unless $4mn\geq r^2$.
\end{dfn}
$J(\Gamma)$ denotes the ring of Jacobi forms of group $\Gamma$.
Let $\mc{J}(\Gamma)$ denote the ring of meromorphic Jacobi forms which is the fractional field of $J(\Gamma)$.

\begin{ex}
A useful meromorphic Jacobi form of index 0 weight 2 is Weierstrass $\wp$ function,
\begin{equation}
\wp(z,\tau)={1\over z^2}+\sum_{(m,n)\in \mbb{Z}^2\setminus(0,0)}\left({1\over (z+m\tau+n)^2}-{1\over( m\tau+n)^2}\right)\,.
\end{equation}
$\wp'(z)$ is meromorphic Jacobi form for $SL(2,\mbb{Z})$ of weight 3.
\end{ex}

Quasi-meromorphic Jacobi forms ring $\widehat{\mc J}(\Gamma):=\mc J(\Gamma)\otimes\widehat{M}(\Gamma)$, and almost meromorphic holomorphic modular form is $\widetilde{\mc J}(\Gamma)=\mc J(\Gamma)\otimes\widetilde{M}(\Gamma)\,$.

\subsection{$\wp$-Uniformization of mirror curve}
Every elliptic curve is birationally equivalent to Weierstrass normal form by calssical theory of elliptic curves, and there is a simple procedure called Nagell’s algorithm to carry the birational transformition\cite{connell1996elliptic}. 
Fix a Torelli marking $\{A,B\}$ of $\overline{\mc{C}}_q$ and a holomorphic 1-form $\omega$, such that
\begin{equation}
\int_A\omega=1\,,\,\,\int_{b}\omega=\tau \,.
\end{equation}
Fixing a base point $o$, we have Abel-Jacobi map
\begin{eqnarray}
u:\mc{C}&\mapsto&\mc{J}{(\mc{C})}=\mbb{C}/\mbb{Z}\oplus\mbb{Z}\tau \,,\nonumber\\
p & \mapsto &u(p)=\int_{o}^p \omega
\end{eqnarray}
In order to do $\wp$-uniformization, change coordinate firstly
\begin{equation}
x=XY^{-1},\, y=Y\,.
\end{equation}\label{mirrorcurvetrans}
Then mirror curve is transformed to 
\begin{equation}\label{subfamilycurve2}
x^3+1+y^3+qxy=0\,.
\end{equation}
which is a Hesse pencil curve. 
Uniformization of \eqref{subfamilycurve2} is
\begin{align}
&t={-12-{1\over 9}q^3+{4^{1/3}\over 3}q\kappa^{-2}U\over 4^{1\over 3}\kappa^{-2}U+q^2}\,,\nonumber\\
&x=(3t-q){-4(27+q^3)t+\kappa^{-3}V(3t-q)\over -8(27+q^3)(1+t^2)}\,,\nonumber\\
&y=tx-1 \,,
\end{align}
with inverse map
\begin{eqnarray}\label{inverseuniformization}
&&U={-\kappa^2\over 3}{108x+{q^3}x+9q^2y+9q^2\over 4^{1\over 3}(3y+3-qx)}\,,\nonumber\\
&&V=-4(q^3+27)\kappa^3{3y^2-qxy+qx-3\over(3y+3-qx)^2}\,.
\end{eqnarray}
Under birational transformation, \eqref{subfamilycurve2} becomes 
\begin{equation}
V^2=4U^3-g_2U-g_3\,.
\end{equation}
with 
\begin{equation}
U=\wp(u,\tau)\,,V=\wp'(u,\tau)\,,
\end{equation}
and
\begin{eqnarray}
&g_2&={ {\kappa^44^{1\over 3}\over 3}(q^4-216q)}\,,\nonumber \\
&g_3&={-{2\over 27}q^6-40q^3+432}.
\end{eqnarray}
The j-invariant is 
\begin{equation}
j=-{q^3 (-216 + q^3)^3\over (27 + q^3)^3}\,.
\end{equation}
We can set 
\begin{equation}\label{qformula}
q(\tau)=-\left(3^3+{\eta(\tau)^{12}\over \eta(3\tau)^{12}}\right)^{1\over 3}\,.
\end{equation}
Then $q(\tau)^3$ is Hauptmodul of $\Gamma_0(3)$, so compactified mirror curve \eqref{subfamilycurve2} is a curve family over $\mc{H}/\Gamma_0(3)$.  
$q(\tau)$ is also equal to
$q(\tau)=-3-\left({3\eta(3\tau)\over \eta({\tau\over 3})}\right)^3$, and it is a Hauptmodul of $\Gamma(3)$.
Let
\begin{equation}\label{g2g3}
g_2=60\cdot 2\cdot\zeta(4)\cdot E_4(\tau)\,,\,
g_3=140\cdot 2\cdot\zeta(6)\cdot E_6(\tau)\,,
\end{equation}
then
\begin{equation}\label{kappa}
\kappa=\left({3\cdot{4\over 3}\pi^4 E_4\over {4}^{1\over 3}q(q^3-216)}\right)^{1\over 4}\,,
\end{equation}
and $\kappa$ is a modualr form of weight 1 with some multiplier system.
By coordinate transformition \eqref{mirrorcurvetrans}, we get the uniformization of mirror curve \eqref{subfamilycurve}.

\subsection {Ramification point}
Let $R=\{r\in \mc{C}_r, d|_p(X)=0\}$ be set of ramification points, they are not necessary to be ramification points, but in the subfamily we choose now, ramification points still have modularity property. 
Since $r\in R$ is the intersection of the following two plane curves
\begin{align}\label{criticaldef}
&X^3+Y^3+Y^6+qXY^3=0\,,\nonumber\\
&1+2Y^3+qX=0\,.
\end{align}
which give us 9 ramification points.
Explicitly, $Y^3$ has 3 different solutions. 
\begin{eqnarray}\label{ramificationpts}
&&Y_1^3={1\over 24}(-12-q^3)+{-24 q^3 - q^6\over 24 t} - {1\over 24} t\,,\nonumber
 \\
&&Y_2^3={1\over 24}(-12-q^3)-{(1+i\sqrt 3)(-24q^3-q^6)\over 48t}+{1\over 48}(1 - i \sqrt 3)t\,,\nonumber
\\
&&Y_3^3={1\over 24}(-12-q^3)-{(1 - i \sqrt 3) (-24 q^3 - q^6)\over
    48 t} + {1\over 48} (1 + i\sqrt 3)t\,,
\end{eqnarray}
with
\begin{equation}\label{criticalpoints}
t=\left(216 q^3 + 36 q^6 + q^9 + 24 \sqrt 3 \sqrt{27 q^6 + q^9}\right)^{1\over 3}\,.
\end{equation}
Let
\begin{equation}
Y_{kj}=e^{2\pi i k\over 3}Y_j\,,\,k\,,j=1\,,2\,,3\,.
\end{equation}
Then we can get the coresponding $X_{kj}$ by equation \eqref{criticaldef}. From the expression of solutions in \eqref{ramificationpts}, we can see that all $X_{kj},\,Y_{kj}$ are invariant under some group action $\Gamma<\Gamma_0(3)$.

\iffalse
\begin{rem}\label{modularityoframifpoins}
(1)There is another way to estimate the modular group. Assume the Galois group of equation system is $G$, then $G<\mu_2\times\mu_3\times\mu_3$. Let $F$ be the normal extension field of \eqref{criticaldef} and \eqref{subfamilycurve}, then it's easy to see that the field extension of $F$ over modular function field of $\Gamma(3)$ is of degree 9. so the Galois group $G=\mu_3\times\mu_3$. And solutions must be modular functions of $\Gamma(54)$, this group may not be the smallest group.\\
(2) The framing we choose here is to get a good expression of solution, in fact for generic framing $f$, ramification points are intersection of 
\begin{eqnarray}
X^3+Y^3+Y^6+qXY^3=0\nonumber\\
(3f)X^3-3Y^3-6Y^6+(f-3)qXY^3=0
\end{eqnarray}
The $X,\,Y$ coordinates of solution, using the first item, are modular functions for some group $\Gamma$.
\end{rem}
\fi

\begin{lem}\label{modularityofxycoordinate}
 Coordinates $X_{kj}$ and $Y_{kj}$ of ramification points are all modular functions of some group $\Gamma$.
\end{lem}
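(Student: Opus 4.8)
My plan is to work intrinsically with the splitting field of the defining system over the modular function field $F:=\mbb{C}(\mc{H}/\Gamma(3))=\mbb{C}(q)$ and to control its ramification, rather than to track the Cardano radicals $t$ and $\sqrt{27q^6+q^9}$ of \eqref{ramificationpts}--\eqref{criticalpoints}, whose individual branch points (e.g. interior values such as $q^3=-24$) are spurious. By \eqref{qformula} the Hauptmodul $q=q(\tau)$ generates $F$, and $\Gamma(3)$ is torsion free, so $\mc{H}\to Y(3):=\mc{H}/\Gamma(3)$ is an honest universal covering with deck group $\Gamma(3)$. Eliminating $X=-(1+2Y^3)/q$ from \eqref{criticaldef} and setting $w=Y^3$ collapses the system to the single cubic
\begin{equation}
8w^3+(12+q^3)w^2+6w+1=0,
\end{equation}
whose roots are $Y_1^3,Y_2^3,Y_3^3$; then each $Y_{kj}$ is a cube root of some root $w_j$ and $X_{kj}=-(1+2w_j)/q$. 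Hence all nine coordinate pairs lie in the finite extension $E$ of $F$ generated by the $w_j$ and their cube roots.

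The decisive point is that $E/F$ must be unramified over the open curve $Y(3)$ — equivalently, ramified over $X(3)$ only at its cusps — for only then does $E$ become the field of modular functions of a finite-index subgroup of $\Gamma(3)$, rather than of some unrelated Fuchsian group. I would establish this by computing the discriminant of the cubic, which factors as $\Delta=-4q^6(q^3+27)$, so the candidate branch locus is $\{q=0,\ q^3=-27,\ q=\infty\}$. From $j=-q^3(q^3-216)^3/(q^3+27)^3$ the points $q^3=-27$ and $q=\infty$ are exactly the four cusps of $X(3)$, hence harmless. At the remaining point $q=0$ the cubic degenerates to $(2w+1)^3$; writing $w=-\tfrac12+\eps$ gives $\eps^3\sim-q^3/32$, so the three local branches are $\eps\sim c\,e^{2\pi ik/3}q$, single valued in $q$ with trivial monodromy around $q=0$. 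Thus the splitting field of the cubic ramifies over $X(3)$ only at cusps.

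For the final cube roots $Y_{kj}=e^{2\pi ik/3}Y_j$, note that substituting $w=0$ into the cubic gives $1=0$, so no root $w_j$ vanishes at any finite $q$; the zeros and poles of the $w_j$ occur only at the cusp $q=\infty$. Hence adjoining the $w_j^{1/3}$ introduces ramification only at cusps as well. Consequently $E/F$ corresponds to a connected finite cover of $Y(3)$ unramified over $Y(3)$, i.e. to a finite-index subgroup $\Gamma\subseteq\Gamma(3)$ with $E=\mbb{C}(\mc{H}/\Gamma)$. Pulling the construction back along $\mc{H}\to Y(3)$ turns the $w_j$, $Y_{kj}$ and $X_{kj}$ into single-valued meromorphic functions on $\mc{H}$ invariant under $\Gamma$; since they are algebraic over $\mbb{C}(q)$ with $q$ meromorphic at the cusps, they are meromorphic at the cusps of $\mc{H}/\Gamma$ as well, so each $X_{kj}$ and $Y_{kj}$ is a modular function for $\Gamma$.

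The step I expect to be the genuine obstacle is this confinement of the ramification to the cusps: it is precisely where a general finite algebraic extension of a modular function field could fail to be modular, and here it succeeds only because the interior zero $q=0$ of $\Delta$ resolves in the monodromy-free way above while the remaining zeros of $\Delta$ sit at cusps. Everything else — the elimination, the discriminant factorisation, and the cube-root bookkeeping — is routine. A crude count ($\mu_2$ from the quadratic resolvent $\sqrt{q^3+27}$ together with two $\mu_3$ factors from the cube roots) shows that $\Gamma$ may be taken of finite index in $\Gamma(3)$, which is all the statement requires.
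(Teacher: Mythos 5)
Your proof is correct, but it takes a genuinely different route from the paper's. The paper in fact offers no formal proof of this lemma: it solves the system \eqref{criticaldef} by Cardano's formula, displays the roots $Y_j^3$ in \eqref{ramificationpts} in terms of the radical $t$ of \eqref{criticalpoints}, and simply asserts that ``from the expression of the solutions'' the coordinates are invariant under some $\Gamma<\Gamma_0(3)$. That assertion is more delicate than it looks, because the auxiliary radical $t$ is branched at the interior value $q^3=-24$ (with $s=q^3$ one checks $(s^2+36s+216)^2-1728(s+27)=s(s+24)^3$), so the displayed expressions are not manifestly single-valued on $\mc{H}$; one must argue that this spurious branching cancels between the $t$ and $t^{-1}$ terms. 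Your intrinsic argument --- eliminate $X$ to get the cubic $8w^3+(12+q^3)w^2+6w+1=0$ in $w=Y^3$ (which is correct, and consistent with \eqref{ramificationpts}), compute the discriminant $-4q^6(q^3+27)$ (also correct), check that the interior zero $q=0$ carries trivial monodromy because the triple root resolves analytically as $\eps\sim c\,e^{2\pi ik/3}q$, locate the remaining zeros of the discriminant at the cusps of $X(3)$ via the $j$-invariant, and note that the final cube roots ramify only over $q=\infty$ since no $w_j$ vanishes at finite $q$ --- avoids the Cardano radicals entirely and supplies exactly the verification the paper leaves implicit: that the splitting field is unramified over the open curve $\mc{H}/\Gamma(3)$, which (since $\Gamma(3)$ is torsion-free) is precisely the condition under which a finite extension of a modular function field is again a modular function field of a finite-index subgroup. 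Without some such check the conclusion could fail, since an algebraic function of a Hauptmodul branched at an interior point is not modular for any subgroup of $\mathrm{SL}(2,\mbb{Z})$. The only thing the paper's explicit formulas buy in exchange is a concrete radical expression for the ramification points, which is not used elsewhere in the modularity argument; your version is the one I would keep.
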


Let $u_r$ be the u-coordinate of ramification point $r\in R$. By the inverse map of uniformization, we can get the following lemma directly.
\begin{lem}\label{modularityofwpatcriticalpoints}
$\wp(u_r)$ and $\wp'(u_r)$ are meromorphic modular forms of group $\Gamma$ of weight 2 and 3 respectively.
\end{lem}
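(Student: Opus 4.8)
The plan is to read $\wp(u_r)$ and $\wp'(u_r)$ directly off the inverse uniformization \eqref{inverseuniformization} and then count modular weights, so that the content of the lemma reduces to the weight bookkeeping plus Lemma~\ref{modularityofxycoordinate}. First I would recall that the uniformization sets $U=\wp(u,\tau)$ and $V=\wp'(u,\tau)$, and that \eqref{inverseuniformization} expresses $U$ and $V$ as explicit rational functions of $x,y,q$ with prefactors $\kappa^2$ and $\kappa^3$ respectively. Evaluating at a ramification point $r\in R$ means substituting $x=X_{kj}Y_{kj}^{-1}$ and $y=Y_{kj}$ (via the coordinate change \eqref{mirrorcurvetrans}), where $(X_{kj},Y_{kj})$ are the ramification-point coordinates of \eqref{ramificationpts}. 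Thus $\wp(u_r)$ and $\wp'(u_r)$ become concrete algebraic expressions in $X_{kj},Y_{kj},q,\kappa$, and no analysis of the $\tau$-dependence of the point $u_r$ itself is needed.

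Next I would do the weight count. By Lemma~\ref{modularityofxycoordinate} the coordinates $X_{kj},Y_{kj}$ are modular functions of $\Gamma$, hence so are $x=X_{kj}Y_{kj}^{-1}$ and $y=Y_{kj}$, since the modular functions of $\Gamma$ form a field closed under products, inverses, and sums. The parameter $q=q(\tau)$ is a Hauptmodul of $\Gamma_0(3)$ (and of $\Gamma(3)$ via \eqref{qformula}), so after restriction to the smaller group $\Gamma<\Gamma_0(3)$ it too is a weight-$0$ modular function of $\Gamma$, and therefore $q^3+27$ and every rational expression in $q$ appearing in \eqref{inverseuniformization} is weight $0$. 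Consequently the two large fractions in \eqref{inverseuniformization}, being rational functions of the weight-$0$ quantities $x,y,q$, are themselves weight-$0$ modular functions of $\Gamma$. Since $\kappa$ is a modular form of weight $1$ (with a multiplier system) by \eqref{kappa}, the prefactor $\kappa^2$ contributes weight $2$ to $U=\wp(u_r)$ and $\kappa^3$ contributes weight $3$ to $V=\wp'(u_r)$. This gives exactly the claimed weights $2$ and $3$, which is the expected answer given that $\wp$ and $\wp'$ are Jacobi forms of index $0$ and weights $2$ and $3$.

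Finally I would conclude meromorphicity: each factor is meromorphic on $\mc{H}$ and meromorphic at the cusps of $\Gamma$ (the $X_{kj},Y_{kj}$ and $q$ are modular functions, and $\kappa$ is a modular form), so the products $\wp(u_r)$ and $\wp'(u_r)$ are meromorphic modular forms of $\Gamma$ of weights $2$ and $3$ respectively. The only genuinely delicate point, and the step I expect to be the main obstacle, is the bookkeeping of the multiplier system: because $\kappa$ carries a nontrivial multiplier, one must check that $\kappa^2$ and $\kappa^3$ yield consistent multiplier systems on $\Gamma$ and that these are the multipliers the paper attaches to its notion of weight-$2$ and weight-$3$ meromorphic modular forms ``with multiplier system.'' Everything else is the substitution and the weight count already outlined, so the real mathematical input is entirely contained in Lemma~\ref{modularityofxycoordinate}, as the phrase ``we can get the following lemma directly'' anticipates.
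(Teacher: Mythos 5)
Your proposal is correct and is essentially the argument the paper intends: the paper gives no separate proof, saying only that the lemma follows ``directly'' from the inverse uniformization \eqref{inverseuniformization}, and your substitution of the ramification-point coordinates via \eqref{mirrorcurvetrans}, the appeal to Lemma~\ref{modularityofxycoordinate} for the weight-$0$ factors, and the weight count from the $\kappa^2$ and $\kappa^3$ prefactors is exactly that route, just written out. Your closing remark on the multiplier system is a reasonable caveat, and it is consistent with the paper's convention that all modular forms here carry multiplier systems.
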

\begin{lem}
Let $r_1,\,r_2\in R$ be two ramification points, then $\wp(u_{r_1}-u_{r_2})$ and $\wp'(u_{r_1}-u_{r_2})$ are all modular forms. 
\end{lem}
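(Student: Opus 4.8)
The plan is to reduce the statement to the preceding Lemma~\ref{modularityofwpatcriticalpoints} by means of the Weierstrass addition law for $\wp$. Writing $a=u_{r_1}$ and $b=u_{r_2}$, the addition formula gives
\begin{equation}
\wp(u_{r_1}-u_{r_2})=\frac{1}{4}\left(\frac{\wp'(u_{r_1})+\wp'(u_{r_2})}{\wp(u_{r_1})-\wp(u_{r_2})}\right)^2-\wp(u_{r_1})-\wp(u_{r_2})\,.
\end{equation}
By Lemma~\ref{modularityofwpatcriticalpoints} each of $\wp(u_{r_1}),\wp(u_{r_2})$ is a meromorphic modular form of weight $2$ and each of $\wp'(u_{r_1}),\wp'(u_{r_2})$ a meromorphic modular form of weight $3$ for $\Gamma$. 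Since the meromorphic modular forms for $\Gamma$, graded by weight and carrying the multiplier systems inherited from $\kappa$, are closed under the graded field operations, the right-hand side is again such a form; hence so is $\wp(u_{r_1}-u_{r_2})$.

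For the weight I would simply track homogeneity: the quotient $(\wp'(u_{r_1})+\wp'(u_{r_2}))/(\wp(u_{r_1})-\wp(u_{r_2}))$ has weight $3-2=1$, its square weight $2$, and $\wp(u_{r_1})+\wp(u_{r_2})$ weight $2$, so $\wp(u_{r_1}-u_{r_2})$ has weight $2$, consistent with $\wp$ being a Jacobi form of weight $2$ and index $0$. The same bookkeeping also confirms that both terms carry the identical multiplier system (the square of the quotient and the sum $\wp(u_{r_1})+\wp(u_{r_2})$ both transform with $v_\kappa^2$), so their sum is a bona fide weight-$2$ form rather than merely a sum of incompatible pieces.

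To treat $\wp'(u_{r_1}-u_{r_2})$ I would use that $\wp'(a-b)$ is likewise a rational function of $\wp(a),\wp'(a),\wp(b),\wp'(b)$ with coefficients in $\mathbb{Q}[g_2,g_3]$; concretely this expression is obtained by differentiating the addition law and eliminating second derivatives via $\wp''=6\wp^2-\tfrac{g_2}{2}$. Substituting the weight-$2$ and weight-$3$ forms of Lemma~\ref{modularityofwpatcriticalpoints} together with the weight-$4$ form $g_2$ and the weight-$6$ form $g_3$ (see \eqref{g2g3}), and tracking homogeneity as before, yields weight $3$, matching $\wp'$. I would prefer this rational formula to the Weierstrass relation $\wp'(u_{r_1}-u_{r_2})^2=4\wp(u_{r_1}-u_{r_2})^3-g_2\wp(u_{r_1}-u_{r_2})-g_3$, since the latter pins down $\wp'(u_{r_1}-u_{r_2})$ only up to sign and would force an extra argument to fix the square root.

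The one point requiring care — and the only genuine obstacle — is the vanishing of the denominator $\wp(u_{r_1})-\wp(u_{r_2})$, which happens exactly when $u_{r_1}\equiv\pm u_{r_2}$ modulo the lattice. All the identities above hold as identities of meromorphic functions of $\tau$, so isolated zeros of the denominator cause no harm; alternatively, in the exceptional case $u_{r_1}\equiv-u_{r_2}$ one replaces the addition law by the duplication formula $\wp(2u)=\tfrac{1}{4}\big((6\wp(u)^2-\tfrac{g_2}{2})/\wp'(u)\big)^2-2\wp(u)$ applied at $u=u_{r_1}$, which is again rational in the data of Lemma~\ref{modularityofwpatcriticalpoints}. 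In short, once Lemma~\ref{modularityofwpatcriticalpoints} is in hand the present statement is a formal consequence of closure of the field of (multiplier-system) meromorphic modular forms under the rational addition and duplication laws, with the weights forced by homogeneity.
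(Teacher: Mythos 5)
Your proposal is correct and follows essentially the same route as the paper: both reduce the claim to Lemma~\ref{modularityofwpatcriticalpoints} via the Weierstrass addition law for $\wp(u_{r_1}-u_{r_2})$ and the chord/rational formula for $\wp'(u_{r_1}-u_{r_2})$, avoiding the sign-ambiguous cubic relation. Your additional bookkeeping of weights, multiplier systems, and the degenerate case $\wp(u_{r_1})=\wp(u_{r_2})$ only tightens the argument (and your addition formula correctly includes the square on the quotient, which is dropped by typo in the paper's display \eqref{grouplawforwp}).
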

\begin{proof}
Since we have relation
\begin{eqnarray}\label{grouplawforwp}
&&\wp(u_{r_1}-u_{r_2})={1\over 4}{\wp'(u_{r_1})+\wp'(u_{r_2})\over \wp(u_{r_1})-\wp(u_{r_2})}-\wp(u_{r_1})-\wp(u_{r_2})\,,\nonumber\\
&&\wp'(u_{r_1}-u_{r_2})={\wp'(u_{r_1})+\wp'(u_{r_2})\over \wp(u_{r_1})-\wp(u_{r_2})}\left(\wp(u_{r_1}-u_{r_2})-\wp(u_{r_1})\right)-\wp'(u_{r_1})\,.
\end{eqnarray}
Therefore by lemma \ref{modularityofwpatcriticalpoints}, we can get both $\wp(u_{r_1}-u_{r_2})$ and $\wp'(u_{r_1}-u_{r_2})$ are modular forms of weight 2 and 3 respectively.
\end{proof}

Involution can only be defined locally near ramification points in this example.
Assume $p=(X,Y)\in \ol{\mc{C}}_q$ near ramification point $(X_{kj},Y_{kj})$, then we have $p^{\ast}=(X,Y^{\ast})$, and
\begin{equation}\label{involutionexpression}
Y^{\ast}=e^{2\pi i k\over 3}\left(-Y^3-1-qX\right)^{1\over 3}\,.
\end{equation}

\section{Modularity of open-closed Gromov-Witten invariants}
\subsection{Bergman kernel}
We will use Schiffer kernel instead of Bergman kernel. To simplify notation, use $u_p$ instead of $u(p)$ and $\omega_{g,n}$ replacing $\omega_{g,n}^0$ since we fix framing $f=0$.
In the recursion, we need to expand $S(p,q)$ and $d^{-1}S(p,q)$ as power series of $u_q-u_r$, and take their coefficients, so we firstly need to study their coefficients carefully and prove that they are all almost meromorphic modular forms or almost meromorphic Jacobi forms.
\begin{equation}
S(p,q)=\left(\wp(u_p-u_q)+\hat \eta_1(\tau)\right)du_pdu_q\,.
\end{equation}
Near ramification point $r$,
\begin{equation}
S(u_p-u_q)=\sum_{k=0}^\infty S^{(k)}(u_p-u_r)(u_q-u_r)^k\,.
\end{equation}
$S^{(k)}(u_p-u_r)$ is obviously meromorphic Jacobi form.
Near ramification point $r\in R$, $u_{q^*}$ can be expressed as function of $Y$. 
\begin{equation}
u_{q^*}=\int_o^{Y^*}\omega\,.
\end{equation}
Then
\begin{align}
d^{-1}S(p,q):&=\int_{q^{*}}^q S(p,\eta)\,\nonumber\\
&=\left(\zeta(u_p-u_q)-\zeta(u_p-u_{q^{*}})+\hat{\eta}_1(u_q-u_{q^*})\right)du_p\,.
\end{align}
Expand $d^{-1}S(p,q)$ near ramification point $u_r$,
\begin{align}\label{expansionofB}
d^{-1}S(p,q)&=\left(S(u_p-u_r)(1-{\partial u(q^*)\over \partial u(q)}|_{q=r})\right)(u_q-u_r)\nonumber\\
&+ \left(-S'(u_p-u_r)+S'(u_p-u_r)\left({\partial u_{q^*}\over \partial u_q}\right)^2\right)-S(u_p-u_{q^*}){\partial^2u_{q^*}\over \partial u_q^2}(u_q-u_r)^2\nonumber\\
&+ O((u_q-u_r)^3)du_p\,.
\end{align}
In the above expansion,
\begin{equation}
{\partial u_{q^*}\over \partial u_q}={\partial u_{q^*}\over \partial Y_{q^*}}{\partial Y_{q^*}\over \partial Y_q}{\partial Y_{q}\over \partial u_q}|_{q=r}={\partial Y_{q^*}\over \partial Y_q}|_{q=r}\,.
\end{equation}
It is a meromorphic modular form when taking value at ramification points by Lemma \ref{modularityofxycoordinate}. Similarly, use chain rule to get ${\partial^n u_{q^*}\over \partial u_q^n}$, and we can prove it is also a meromorphic modular form when taking value at ramification points. Additionally, we have relation 
\begin{equation}
\wp'(u)^2=4\wp(u)^3-g_2(\tau)\wp(u)-g_3(\tau)\,,
\end{equation}
with $g_2\,,\,g_3$ given in \eqref{g2g3}. Therefore coefficient of $(u_p-u_r)^k$ in expansion of $d^{-1}S(p,q)$, denoted by $S_k$, is a polynomial of $\wp'$ and $\wp$ with almost meromorphic modular form as coefficients, so $S_k$ is almost-meromorphic Jacobi form. Additionally, in $S_k$, the order of differential of $\wp(u_p-u_r)$ is no more than $k-1$, so at ramification point $r\in R$, $S_k$ has order of pole no more than $k+1$.
Similarly for $S(u_p-u_r)$.
Collect these results as a lemma for later use,
\begin{lem}\label{modularirityofS}
Expanded near ramification point $r$, $d^{-1}S(p,q)=\sum_{k=0}^\infty S_k(u_q-u_r)^k$. Then the coefficient $S_k$ is an almost meromorphic Jacobi form as a polynomial of $\wp'(u_p-u_r)$ and $\wp(u_p-u_r)$. Further, $S_k$ only has poles at ramification point $r$, and the order of pole is no more than $k+1$.
Similarly, $S(u_p-u_{q^*})$ can be expanded as power series of $(u_q-u_r)$ whose coefficients are almost meromorphic Jacobi form and coefficient of $(u_q-u_r)^k$ has pole at $r$ of order no more than $k+2$.
\end{lem}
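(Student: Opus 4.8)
The plan is to expand $d^{-1}S(p,q)$ directly from its closed form
$d^{-1}S(p,q)=\bigl(\zeta(u_p-u_q)-\zeta(u_p-u_{q^*})+\hat\eta_1(u_q-u_{q^*})\bigr)du_p$,
where $\zeta$ is the Weierstrass zeta function normalized so that $\zeta'=-\wp$, and to read off the Taylor coefficients in powers of $(u_q-u_r)$ one by one. Since the involution fixes $r$, at $q=r$ one has $u_{q^*}=u_r$, so the two zeta terms cancel and the constant term $S_0$ vanishes (the series effectively starts at $k=1$). Using $\partial_{u_q}\zeta(u_p-u_q)=\wp(u_p-u_q)$, repeated differentiation of the first term gives its $(u_q-u_r)^k$-coefficient as a constant multiple of $\wp^{(k-1)}(u_p-u_r)$. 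For the middle term $\zeta(u_p-u_{q^*})$, where $u_{q^*}$ is itself a function of $u_q$, I would expand by the Fa\`a di Bruno formula: the $k$-th $u_q$-derivative is a universal polynomial in $\wp^{(j)}(u_p-u_{q^*})$ with $j\le k-1$, whose coefficients are products of the derivatives $\partial^i u_{q^*}/\partial u_q^i$; evaluating at $q=r$ replaces $u_{q^*}$ by $u_r$, so again the highest derivative of $\wp$ that appears is $\wp^{(k-1)}(u_p-u_r)$. The final term $\hat\eta_1(u_q-u_{q^*})$ is independent of $u_p$ and so contributes no pole in $u_p$, only the scalar $\hat\eta_1$ times the $u_q$-expansion of $u_q-u_{q^*}$.

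Next I would invoke the Weierstrass relation $\wp'^2=4\wp^3-g_2\wp-g_3$, with $g_2,g_3$ the modular forms of \eqref{g2g3}, to rewrite every $\wp^{(j)}(u_p-u_r)$ as a polynomial in $\wp(u_p-u_r)$ and $\wp'(u_p-u_r)$ with modular coefficients; this shows each $S_k$ is such a polynomial, hence an almost meromorphic Jacobi form in $u_p-u_r$ once its scalar coefficients are checked to be almost meromorphic modular. For the pole bound, as a function of $u_p$ the derivatives $\wp^{(j)}(u_p-u_r)$ have their only pole at $u_p=u_r$, so $S_k$ has a pole only at the ramification point $r$; since the top term is $\wp^{(k-1)}$, whose pole at $u_p=u_r$ has order $(k-1)+2=k+1$, the order of $S_k$ is at most $k+1$. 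The same expansion applied to $S(u_p-u_{q^*})=\bigl(\wp(u_p-u_{q^*})+\hat\eta_1\bigr)$ produces a $(u_q-u_r)^k$-coefficient whose top term is $\wp^{(k)}(u_p-u_r)$, giving pole order at most $k+2$, as claimed.

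It remains to control the scalar coefficients, and this is the real content: I must show that the derivatives $\partial^i u_{q^*}/\partial u_q^i$, evaluated at $q=r$, are meromorphic modular forms. This is the main obstacle, because the involution $*$ is only defined locally near $R$. I would use the explicit formula \eqref{involutionexpression}, $Y^*=e^{2\pi i k/3}(-Y^3-1-qX)^{1/3}$, together with the chain rule
$\partial u_{q^*}/\partial u_q=(\partial u_{q^*}/\partial Y_{q^*})(\partial Y_{q^*}/\partial Y_q)(\partial Y_q/\partial u_q)$,
which at $q=r$ (where $q^*=r$, so the outer factors cancel) collapses to $\partial Y_{q^*}/\partial Y_q\big|_{q=r}$; by \eqref{involutionexpression} and the curve equation \eqref{subfamilycurve} this is a rational expression in the ramification-point coordinates $X_{kj},Y_{kj}$ and the Hauptmodul $q$, hence a meromorphic modular form by Lemma \ref{modularityofxycoordinate}. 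Iterating the chain rule writes the higher derivatives $\partial^i u_{q^*}/\partial u_q^i\big|_{q=r}$ as rational expressions in the same data together with $\wp(u_r),\wp'(u_r)$, which are modular by Lemma \ref{modularityofwpatcriticalpoints}; combined with $\hat\eta_1\in\widetilde{\mc M}(\Gamma)$, this makes all coefficients almost meromorphic modular forms and completes the identification of each $S_k$ as an almost meromorphic Jacobi form.
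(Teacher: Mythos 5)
Your proposal is correct and follows essentially the same route as the paper: expand the closed form $\bigl(\zeta(u_p-u_q)-\zeta(u_p-u_{q^*})+\hat\eta_1(u_q-u_{q^*})\bigr)du_p$ in powers of $u_q-u_r$, reduce the derivatives $\wp^{(j)}(u_p-u_r)$ to polynomials in $\wp,\wp'$ via the Weierstrass relation, and control the scalar coefficients $\partial^i u_{q^*}/\partial u_q^i\big|_{q=r}$ by the chain rule together with Lemma \ref{modularityofxycoordinate}. You merely make explicit (via Fa\`a di Bruno and the top-derivative count $\wp^{(k-1)}$ giving pole order $k+1$) what the paper leaves implicit, so no substantive difference.
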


\iffalse
\begin{eqnarray}
&&\wp(u_p-u_r)={1\over 4}({\wp'(u_p)+\wp'(u_r)\over \wp(u_p)-\wp(u_r)})^2-\wp(u_p)-\wp(u_r)\,\nonumber
\\
&&\wp'(u_p-u_r)={\wp'(u_p)+\wp'(u_r)\over \wp(u_p)-\wp(u_r)}(\wp(u_p-u_r)-\wp(u_p))-\wp'(u_p)
\end{eqnarray}

In expression \eqref{expansionofB}, 

which is modular function from \eqref{involutionexpression}.
 
 \fi

\subsection{$\lambda-\lambda^{\ast}$}
 $\lambda=\ln Y {dX\over X}|_{\mc{C}_q}$.
According to above subsection, we know
\begin{eqnarray}
&&Y^3-{Y^{\ast}}^3=2Y^3+1+qX\,,\nonumber\\
&&Y^3+{Y^{\ast}}^3=-1-qX\,.
\end{eqnarray}

Let $p_r=(x_r,y_r) \in R$.
\begin{align}
\lambda-\lambda^{\ast}&={1\over 3}\left(\ln\left(1+{Y^3-{Y^*}^3\over Y^3+{Y^*}^3}\right)-\ln\left(1-{Y^3-Y*^3\over Y^3+Y*^3}\right)\right){dX\over X}\nonumber
\\
&={2\over 3}\sum_{k=0}^\infty{1\over 2k+1}\left({2Y^3+1+qX\over -1-qX}\right)^{2k+1}{dX\over X}\nonumber\\
&={2Y^2\over 3X^2+qY^3}\sum_{k=0}^\infty{\left(2Y^3+1+qX\right)^{2k+2}\over (1+qX)^{2k+1} \cdot (2k+1)}{dY\over X}\,.
\end{align}
When expanded in Jacobi coordinate $u$,
\begin{equation}
(2Y^3+1+qX)^2= \left(-12 X_r^2+2q^2X_r+2q\right){\partial^2X\over \partial u^2}|_{u_r}\left(u-u_r\right)^2+O\left((u-u_r)^3\right)\,,
\end{equation}
with
\begin{equation}
-12 X_r^2+2q^2X_r+2q\neq 0\,,
\end{equation}
except for finite $q$ which are solutions of 
\begin{equation}
{q^{12}\over 864}+{q^9\over 12}+2q^6+17q^3+27=0\,.
\end{equation}
Above arguments give the following lemma,
\begin{lem}\label{modularityoflambda}
$\lambda-\lambda^{\ast}$ has two order of zeros at each ramification point, and when expanded as power series of $u-u_r$, all coeffecients of $\lambda-\lambda^{\ast}$ are meromorphic modular forms.
\end{lem}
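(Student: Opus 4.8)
The plan is to treat $\lambda-\lambda^{\ast}$ as the single one-form $(\ln Y-\ln Y^{\ast})\,\frac{dX}{X}$ and to extract both its order of vanishing and the modularity of its Taylor coefficients from the $\wp$-uniformization. Cubing \eqref{involutionexpression} gives $(Y^{\ast})^3=-Y^3-1-qX$, whence the two identities $Y^3-(Y^{\ast})^3=2Y^3+1+qX$ and $Y^3+(Y^{\ast})^3=-1-qX$. The involution fixes each ramification point $r$, since there $1+2Y_r^3+qX_r=0$ forces $(Y_r^{\ast})^3=Y_r^3$; choosing branches compatibly, $\ln Y-\ln Y^{\ast}=\frac13\ln\frac{Y^3}{(Y^{\ast})^3}$ vanishes at $r$, and expanding the logarithm as an odd power series in $w=\frac{2Y^3+1+qX}{-1-qX}$ yields
\begin{equation}
\lambda-\lambda^{\ast}=\frac23\sum_{k\ge0}\frac{1}{2k+1}\,w^{2k+1}\,\frac{dX}{X}.
\end{equation}

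The decisive simplification is that \eqref{subfamilycurve} is a quadratic in $Y^3$, namely $Y^6+(1+qX)Y^3+X^3=0$, so that $2Y^3+1+qX=\pm\sqrt{(1+qX)^2-4X^3}$ and hence
\begin{equation}
\bigl(2Y^3+1+qX\bigr)^2=(1+qX)^2-4X^3=:G(X)
\end{equation}
is a function of $X$ alone. At a ramification point the defining equation \eqref{criticaldef} forces $G(X_r)=0$, so every summand carries $w^{2k+1}$ with $w$ vanishing at $r$. Converting $\frac{dX}{X}$ to $\frac{dY}{X}$ by implicit differentiation of \eqref{subfamilycurve} contributes one further factor $1+2Y^3+qX$, which also vanishes at $r$, and reproduces the manifestly regular form displayed in the text. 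Since $r$ is a critical point of $X$ we have $\partial_uX|_{u_r}=0$, so $X-X_r=\frac12\,\partial_u^2X|_{u_r}\,(u-u_r)^2+\cdots$ and
\begin{equation}
G(X)=G'(X_r)\,(X-X_r)+\cdots=\tfrac12\,G'(X_r)\,\partial_u^2X|_{u_r}\,(u-u_r)^2+\cdots,
\end{equation}
with $G'(X_r)=-12X_r^2+2q^2X_r+2q$. Thus $\lambda-\lambda^{\ast}$ vanishes to order exactly two at $r$ as soon as the leading coefficient $G'(X_r)\,\partial_u^2X|_{u_r}$ is nonzero.

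For the modularity of the coefficients I would substitute the uniformization into the $\frac{dY}{X}$ form. After the change of coordinates $x=XY^{-1},\,y=Y$ bringing \eqref{subfamilycurve} to the Hesse form \eqref{subfamilycurve2}, the birational uniformization \eqref{inverseuniformization} exhibits $x,y$ — hence $X=xy$ and $Y=y$ — as rational functions of $\wp(u,\tau)$ and $\wp'(u,\tau)$ whose coefficients are rational in $q(\tau)$ and $\kappa(\tau)$, all modular. Expanding $\lambda-\lambda^{\ast}$ about $u=u_r$ then produces coefficients that are rational expressions in $X_r,Y_r,q$ and in the values $\wp^{(n)}(u_r)$. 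By Lemma \ref{modularityofxycoordinate} the coordinates $X_r,Y_r$ are modular functions, by Lemma \ref{modularityofwpatcriticalpoints} so are $\wp(u_r),\wp'(u_r)$, and the Weierstrass relation $\wp'^2=4\wp^3-g_2\wp-g_3$ with $g_2,g_3$ as in \eqref{g2g3} lets me rewrite every $\wp^{(n)}(u_r)$ as a polynomial in $\wp(u_r),\wp'(u_r)$ with modular coefficients. Because meromorphic modular forms for $\Gamma$ form a field, each coefficient is again a meromorphic modular form, which is the second assertion.

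The step I expect to be the main obstacle is confirming that the double zero is genuine for generic $q$, i.e.\ that the leading coefficient $G'(X_r)\,\partial_u^2X|_{u_r}$ does not vanish identically. This is where explicit computation is unavoidable: one must expand $\partial_u^2X|_{u_r}$ through the implicit relation for $X$ at the vertical tangent $r$ (where $\partial_uX|_{u_r}=0$), feed in the ramification constraint $4X_r^3=(1+qX_r)^2$ coming from $G(X_r)=0$, and read off when the resulting expression in $q$ degenerates. The claim is that this happens only on the finite zero locus of $\frac{q^{12}}{864}+\frac{q^9}{12}+2q^6+17q^3+27$; verifying this precise polynomial, together with the subsidiary genericity facts $X_r\neq0$, $3X_r^2+qY_r^3\neq0$ and $1+qX_r\neq0$ needed for the prefactor to be finite and nonzero at $r$ (each reflecting that $r$ is a simple critical point on a smooth fiber of the family), is the one genuinely computational point of the argument.
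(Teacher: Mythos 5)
Your proposal is correct and follows essentially the same route as the paper: the same identities for $Y^3\pm(Y^{\ast})^3$, the same odd logarithmic series in $w=\frac{2Y^3+1+qX}{-1-qX}$, the same leading coefficient $(-12X_r^2+2q^2X_r+2q)\,\partial_u^2X|_{u_r}$ with the same genericity polynomial in $q$, and the same reduction of the Taylor coefficients to modular data via Lemmas \ref{modularityofxycoordinate} and \ref{modularityofwpatcriticalpoints}. Your observation that $(2Y^3+1+qX)^2=(1+qX)^2-4X^3$ is the discriminant of the curve viewed as a quadratic in $Y^3$, hence a function of $X$ alone vanishing at $X_r$, is a cleaner justification of the expansion the paper simply asserts.
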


\subsection{Modularity of Gromove-Witten invariants}
Apply Eynard-Orantin recursion, we can get $\widehat{\omega}_{g,n}$, and it has similar properties as in \cite{1805.04894}.
\begin{thm}\label{maintheorem}
\begin{enumerate}
\item
$\widehat{\omega}_{g,n}(u_1,\cdots,u_n)$ is symmetric in its argumengs, and only has poles at ramification points $R$. At each ramification point, the order of pole of each argument is no more than $6g+2n-4$, and total order of pole of all arguments is no more than $6g+4n-6$.
\item
$\widehat{\omega}_{g,n}(u_1,\cdots,u_n)$ is a polynomial of $S(u_{i}-u_r)$ and $S'(u_{i}-u_r)$ with coefficients lying in the ring of almost meromorphic modular forms, 
\begin{equation}
\widetilde{\mc{R}}=\mc{M}(\Gamma)\left[\{\wp(u_{r_i}-u_{r_j})\}_{r_i,r_j\in R},\hat{\eta}_1\right]\,.
\end{equation}
So, $\widehat{\omega}_{g,n}$ is an almost meromorphic Jacobi form in the differential ring
\begin{equation}
\widetilde{\mc J}(\Gamma(3)):=\widetilde{\mc{R}}\left[\{\wp(u_i-u_r),\wp'(u_i-u_r)\}_{i\in\{0,\cdots,n\},r\in R}\right]\,.
\end{equation}
\item For $g\geq 2$,
$\widehat{\mc{F}}_{g}={1\over 2g-2}\sum_{r\in R}Res_{q\to r}d^{-1}\lambda\cdot\omega_{g,1}$ is in the ring $\widetilde{\mc{R}}$.\
\end{enumerate}
\end{thm}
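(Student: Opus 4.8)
The plan is to prove all three claims simultaneously by induction on $2g-2+n$ via the Eynard-Orantin recursion \eqref{eorecursionrelation}, using the structural lemmas already established as the engine. The base cases are handled by the explicit analysis of $S(u_p-u_q)$ and $d^{-1}S(p,q)$ in Lemma \ref{modularirityofS}: for $(g,n)=(0,3)$ and $(1,1)$ the recursion expresses $\widehat{\omega}_{g,n}$ as a residue at each $r\in R$ of a product of the kernel $K(p_1,q):=\tfrac{1}{2}d^{-1}S(p_1,q)/(\lambda-\lambda^*)$ with lower-order differentials, and each factor has already been shown to be an almost meromorphic Jacobi form in the coordinates $u_{p_i}-u_r$ with coefficients in $\widetilde{\mc{R}}$. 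The inductive step then reduces to showing that the residue operation preserves these three properties. First I would verify that the recursion kernel $K(p_1,q)$ is itself well-behaved: by Lemma \ref{modularityoflambda}, $\lambda-\lambda^*$ has a double zero at each $r$ and modular coefficients, while $d^{-1}S$ has a simple zero there by Lemma \ref{modularirityofS}, so $K$ has a simple pole in $(u_q-u_r)$ whose Laurent coefficients are almost meromorphic Jacobi forms in $u_{p_1}-u_r$ lying in $\widetilde{\mc{R}}[\wp,\wp']$.

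For part (2), the key observation is that taking a residue at $q\to r$ amounts to extracting a fixed Laurent coefficient in the local expansion in $u_q-u_r$; since every ingredient entering the brace in \eqref{eorecursionrelation} is, by the inductive hypothesis, a polynomial in the $S(u_{p_i}-u_r)$, $S'(u_{p_i}-u_r)$ with $\widetilde{\mc{R}}$-coefficients, and since the product $\widehat{\omega}_{g_1}\widehat{\omega}_{g_2}$ and the self-gluing $\widehat{\omega}_{g-1,n+1}(q,q^*,\ldots)$ specialize $u_q\to u_r$ and $u_{q^*}\to u_r$ (using that $u_{q^*}$ is a modular function of $u_q$ evaluated at $r$, as in the chain-rule computation following \eqref{expansionofB}), the residue lands back in $\widetilde{\mc{J}}(\Gamma(3))$. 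Crucially, the differentiations appearing when expanding $S(u_{p_1}-u_q)$ about $q=r$ only produce $\wp$ and its derivatives in $u_{p_1}-u_r$, and the elliptic relation $\wp'^2=4\wp^3-g_2\wp-g_3$ keeps everything polynomial of bounded degree. Symmetry in part (1) follows from the standard argument that, although the recursion is manifestly symmetric only in $p_2,\ldots,p_n$, the combined object is symmetric in all arguments; I would invoke this as in \cite{1805.04894}. The pole-order bounds $6g+2n-4$ and $6g+4n-6$ are then established by a separate bookkeeping induction: each residue raises the pole order in $u_{p_1}-u_r$ by a controlled amount determined by the order of the pole of $K$ and of the $S_k$ (bounded by $k+1$, $k+2$ in Lemma \ref{modularirityofS}), matching the recursion for these linear bounds.

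For part (3), once parts (1) and (2) give that $\omega_{g,1}(u_1)$ is an almost meromorphic Jacobi form with poles only at $R$, I would take the $\mathrm{Im}\,\tau\to\infty$ limit to replace $\widehat{\omega}$ by $\omega$ (equivalently, $\hat\eta_1$ by $\eta_1$), so the coefficients land in the holomorphic-anomaly-free ring $\widetilde{\mc{R}}$. The defining formula $\widehat{\mc{F}}_g=\tfrac{1}{2g-2}\sum_{r\in R}\mathrm{Res}_{q\to r}\,d^{-1}\lambda\cdot\omega_{g,1}$ is again a sum of residues at ramification points; since $d^{-1}\lambda$ has modular Laurent coefficients at each $r$ (by the same analysis as Lemma \ref{modularityoflambda}) and $\omega_{g,1}$ is a Jacobi form in $u_1-u_r$ with $\widetilde{\mc{R}}$-coefficients, the residue extracts a single coefficient and kills all dependence on the Jacobi variable, leaving a pure element of $\widetilde{\mc{R}}$. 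Summing over the finitely many $r\in R$, whose $\wp(u_{r_i}-u_{r_j})$ differences are modular by the lemma preceding the involution discussion, stays inside $\widetilde{\mc{R}}$.

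The main obstacle I anticipate is the bookkeeping in part (1): verifying that the residue operation in \eqref{eorecursionrelation} propagates the exact linear pole-order bounds $6g+2n-4$ and $6g+4n-6$ through the induction. This requires tracking how the simple pole of $K$ and the order-$(k+1)$ and order-$(k+2)$ poles of the expansion coefficients in Lemma \ref{modularirityofS} combine when one multiplies two lower-order differentials and then extracts a residue; the self-energy term $\widehat{\omega}_{g-1,n+1}(q,q^*,\ldots)$ is the delicate one, because both $q$ and $q^*$ approach $r$ and one must confirm that the induced orders add correctly rather than exceeding the claimed bound. The modularity statements (parts 2 and 3), by contrast, are essentially closed under the residue operation once the kernel's structure is pinned down, so I expect them to be routine given the preparatory lemmas; it is the precise pole accounting that demands the most care.
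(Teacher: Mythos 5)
Your proposal is correct and follows essentially the same route as the paper: an induction through the Eynard--Orantin recursion with base case $\widehat{\omega}_{0,3}$, using Lemma \ref{modularirityofS} and Lemma \ref{modularityoflambda} to show the kernel $\tfrac{1}{2}d^{-1}S/(\lambda-\lambda^*)$ has a simple pole at each $r\in R$, that the residue extraction stays in $\widetilde{\mc{J}}(\Gamma(3))$, and that the coefficient-of-$(u_q-u_r)^{6g+2n-5}$ bookkeeping yields the pole bounds $6g+2n-4$ and $6g+4n-6$. If anything, you are more careful than the paper's sketch on the self-gluing term $\widehat{\omega}_{g-1,n+1}(q,q^*,\ldots)$ and on part (3), which the paper leaves implicit.
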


\begin{proof} All statemts are easy to prove by Lemma \ref{modularirityofS} and Lemma \ref{modularityoflambda} and are similar to the proof of Theorem $4.4$ in \cite{1805.04894}. I will give a skech. 
\item 1. As shown in below example \ref{omega03}, $\widehat{\omega}_{0,3}$ has pole of order 6 at ramification points totally, and each argument has order 2. In recursion \eqref{eorecursionrelation}, assume this statement is true for $\widehat{\omega}_{g_1,|I|+1},\,\widehat{\omega}_{g_2,|J|+1}$. Then at a ramification point $r$, argument $u_q$ has order of pole no more than $6g_1+2(|I|+1)-4$ in $\widehat{\omega}_{g_1,|I|+1}$, and no more than $6g_2+2(|J|+1)-4$ in $\widehat{\omega}_{g_2,|J|+1}$, so they totally will give $6g+2n-6$ order of pole. Since ${1\over 2}{d^{-1}S\over \lambda-\lambda^*}$ has simple pole at ramification point $r$, so coefficient of $(u_q-u_r)^{6g+2n-5}$ in expansion of $d^{-1}S$  will give the largest order of pole at $r$ after the residue, which has order of pole no more than $6g+2n-4$  by Lemma \ref{modularirityofS}. \\
Assume $\widehat{\omega}_{g_1,|I|+1}$ has total order of pole no more than $6g_1+4(|I|+1)-6$ and $\widehat{\omega}_{g_2,|J|+1}$ has total order of pole no more than $6g_2+4(|J|+1)-6$, so total order of pole of $\widehat{\omega}_{g,n}$ is no more than $6g_1+4(|I|+1)-6+6g_2+4(|J|+1)-6+2=6g+4n-6$.
\item 2. This statement is a direct result by Eynard-Orantin recursion and the expansion property of $d^{-1}S$ and $\lambda^*-\lambda$ proved in Lemma \ref{modularirityofS} and Lemma \ref{modularityoflambda}.
\end{proof}

I would like to give an example of $\widehat{\omega}_{0,3}$ to give a direct glance at $\widehat{\omega}_{g,n}$.
\begin{ex}\label{omega03}
\begin{align}
\widehat{\omega}_{0,3}(p_1,p_2,p_3)
&=\sum_{r\in R}Res_{q\to r}{{1\over2}d^{-1}S(p_1, q)\over \lambda-\lambda^*}\left(S(q,p_2)S(q^{*},p_3)+S(q^*,p_2)S(q,p_3)\right)\,\nonumber\\
&=\sum_{r\in R}{
\left(1-{\partial y^{*}\over \partial Y}|_{Y=Y_r}\right)S(u_{p_1}-u_r)S(u_{p_2}-u_r)S(u_{p_3}-u_r)\over
 {-2\over 3}\left({3Y_r^2\over \partial_X H(X,Y)|_{X_r}}{(-12X_r^2+2q^2X_r+2q)\over X_r(-1-qX_r)}{\partial^2X\over \partial u^2}{\partial Y\over \partial u}\right)|_{u=u_r}}
\end{align}
From the expression of $\omega_{0,3}(p_1,p_2,p_3)$, we know that, it is an almost meromorphic Jacobi form of weight 3. 
\end{ex}

Since $\widehat{\omega}_{g,n}$ is a polynomial of finite degree in terms of $\hat{\eta}_1$, we can take $Im\tau\to \infty$, then we will recover $\omega_{g,n}$.

\begin{thm}\label{omegaquasimodular}(modularity of generating functions)\\
\begin{enumerate}
\item. $d_{\tilde{X}_1}\cdots d_{\tilde {X}_n}\mc{F}_{(g,n)}^{\mc{X},\mc{L}}$ under mirror map is in the ring $\widehat{\mc{J}}(\Gamma)\otimes(H^*(\mc{B}\mu_3,\mbb{C}))^{\otimes n}$, where 
\begin{equation}
\widehat{\mc{J}}(\Gamma)=\widehat{\mc{R}}\left[\{\wp'(u_i-u_r),\wp(u_i-u_r)\}_{i\in\{1,\cdots,n\},r\in R}\right]\,,
\end{equation}
is the differential ring generated by $\wp'(u_i-u_r)$ and $\wp(u_i-u_r)$,
and 
 \begin{equation}
\widehat{\mc{R}}=\mc{M}(\Gamma)\left[\{\wp(u_{r_i}-u_{r_j})\}_{r_i.r_j\in R},\eta_1\right]\,,
\end{equation}
which is the ring of quasi-meromorphic modular forms.
\item. $\mc{F}_{g}^{\mc{X}}$ is quasi-meromorphic modular form of some group $\Gamma$ under closed mirror map. 
 Group $\Gamma$ depends on the one-dimensional subfamily we choose.
 \end{enumerate}
\end{thm}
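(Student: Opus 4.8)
The plan is to deduce Theorem \ref{omegaquasimodular} from the already-established Theorem \ref{maintheorem} by transporting modularity from the B-model topological recursion invariants $\widehat{\omega}_{g,n}$ (which live in the ring $\widetilde{\mc J}(\Gamma(3))$ of almost meromorphic Jacobi forms) to the A-model open-closed generating functions, using the remodeling mirror theorem \eqref{mirrortheorem}. The key observation is that the two rings $\widetilde{\mc J}(\Gamma)$ and $\widehat{\mc J}(\Gamma)$ differ only in replacing $\hat\eta_1=\eta_1-\pi/\Im\tau$ by $\eta_1=\tfrac{\pi^2}{3}E_2$, equivalently $\widehat{E}_2$ by $E_2$; one is the ring of almost meromorphic (anti-holomorphically completed) objects and the other the ring of quasi-meromorphic (holomorphic) objects. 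So modularity on the B-side is already packaged, and the work is to pass to the holomorphic limit and identify the output with the A-model potential.

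First I would invoke the mirror theorem \eqref{mirrortheorem}: for $2g-2+n>0$ we have
\begin{equation}
d_{\tilde{X}_1}\cdots d_{\tilde{X}_n}\mc{F}_{(g,n)}^{\mc{X},\mc{L}}={(-1)^{g-1+n}\over 3^{n}}\sum_{l_1,\cdots,l_n}(\delta^{l_1\cdots l_n})^*\omega_{g,n}^f(u_1,\cdots,u_n)\,\psi_{l_1}\otimes\cdots\otimes\psi_{l_n}\,.
\end{equation}
Here $\omega_{g,n}$ is the $\Im\tau\to\infty$ limit of $\widehat{\omega}_{g,n}$, which by Theorem \ref{maintheorem}(2) is a finite-degree polynomial in $\hat\eta_1$; taking the limit replaces $\hat\eta_1$ by $\eta_1$ and lands $\omega_{g,n}$ in the quasi-meromorphic ring $\widehat{\mc J}(\Gamma(3))$, where $\widehat{\mc R}=\mc M(\Gamma)[\{\wp(u_{r_i}-u_{r_j})\},\eta_1]$ exactly as stated. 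The pullbacks $(\delta^{l_1\cdots l_n})^*$ merely expand $\omega_{g,n}$ as a power series in the local coordinates $u_i-w_{l_i}$ near the open Gromov-Witten points; since each $w_{l_i}$ is a fixed $u$-coordinate on the Jacobian and the expansion coefficients of $\wp,\wp'$ and their derivatives remain in $\widehat{\mc R}$, the pullback preserves membership in the ring, and tensoring by $\psi_{l_i}\in H^*(\mc B\mu_3,\mbb C)$ accounts for the factor $(H^*(\mc B\mu_3,\mbb C))^{\otimes n}$. This establishes part (1). For part (2), I would apply the remodeling identity $\mc{F}_g^{\mc X}=\widehat{\mc F}_g=\tfrac{1}{2g-2}\sum_{r\in R}\mathrm{Res}_{q\to r}\,d^{-1}\lambda\cdot\omega_{g,1}$ together with Theorem \ref{maintheorem}(3), which already places $\widehat{\mc F}_g$ in $\widetilde{\mc R}$; passing to the holomorphic limit again sends $\hat\eta_1\mapsto\eta_1$ and deposits $\mc{F}_g^{\mc X}$ in $\widehat{\mc R}=\widehat{\mc M}(\Gamma)$, the ring of quasi-meromorphic modular forms, with group $\Gamma<\Gamma_0(3)$ the one from Lemma \ref{modularityofxycoordinate}.

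The step I expect to be the main obstacle is justifying that the holomorphic limit $\Im\tau\to\infty$ commutes with the pullback $(\delta^{l_1\cdots l_n})^*$ and with the residue sum over $R$, and more delicately, that the open mirror map $\tilde X_j \mapsto u_j$ does not destroy modularity when one differentiates and re-expands. One must check that the local coordinates $u^{l}$ near the open GW points, and hence the pullback maps $\delta^{l_1\cdots l_n}$, have expansion coefficients that are themselves (quasi-)meromorphic modular forms of $\Gamma$ — this is where the modularity of the ramification-point data from Lemmas \ref{modularityofxycoordinate}--\ref{modularityofwpatcriticalpoints} and of $\lambda-\lambda^*$ from Lemma \ref{modularityoflambda} must be invoked to control the $Y_l=Y(\nu_l)$ and the coordinate $\hat X=XY^f$. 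Once it is verified that these structural maps are defined over the modular function field $\mbb C(\mc H/\Gamma)$, the transport of modularity is formal, and the result reduces to the bookkeeping already carried out in the analogous Theorem $4.4$ of \cite{1805.04894}.
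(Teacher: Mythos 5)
Your proposal follows essentially the same route as the paper: invoke the remodeling mirror theorem \eqref{mirrortheorem} to transport Theorem \ref{maintheorem} from the B-side, then pass to the holomorphic limit $\Im\tau\to\infty$ so that $\hat\eta_1$ becomes $\eta_1$ and $\widetilde{\mc J}$ becomes $\widehat{\mc J}$. The paper's own proof is just a one-sentence version of this; your additional care about the pullbacks $(\delta^{l_1\cdots l_n})^*$ near the open GW points is handled in the paper only afterwards (Lemma \ref{expansionatopengwpt}), so your write-up is, if anything, more complete than the original.
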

\begin{proof}
From equation \eqref{mirrortheorem}, $d_{\tilde{X}_1}\cdots d_{\tilde {X}_n}\mc{F}_{(g,n)}^{\mc{X},\mc{L}}$ takes value in $\omega_{g,n}\otimes H^*(\mc{B}\mu_3,\mbb{C})$ and $\omega_{g,n}$ is quasi-meromorphic Jacobi form as limit of alomost meromorphic Jacobi form. 
\end{proof}

Before going to the modularity of open closed Gromov-Witten invariants, expand the $\omega_{g,n}$ in $X$ coordinate near open GW points.

In this special subfamily we are discussing now, the 3 open GW points are very easy, $\nu_0=(0,-1),\nu_1=(0,e^{\pi i\over 3}),\nu_2=(0,e^{2\pi i\over 3})$. Assume the corresponding $u$-coordinate of these points are $w_l,l=0,1,2$. Then From relation \eqref{inverseuniformization}, $\wp(w_l)$ and $\wp'(w_l)$ are modular forms. 
\begin{lem}\label{expansionatopengwpt}
Expand each argument of $\omega_{g,n}$ near an open GW point, $\omega_{g,n}$ can be written as a power series of $X$, with coefficients quasi-meromorphic modular forms.
\end{lem}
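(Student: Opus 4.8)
The plan is to reduce the statement to two local-expansion facts and compose them. By Theorem \ref{maintheorem} (after the harmless limit $\Im\tau\to\infty$), $\omega_{g,n}$ equals a polynomial in the functions $\wp(u_i-u_r)$ and $\wp'(u_i-u_r)$, $r\in R$, with coefficients in the ring $\widehat{\mc R}$ of quasi-meromorphic modular forms, times $du_1\cdots du_n$. Expanding each argument near an open GW point $\nu_{l_i}$ thus breaks into: (i) Taylor-expanding each $\wp(u_i-u_r)$ and $\wp'(u_i-u_r)$ in $u_i-w_{l_i}$, and (ii) re-expanding $u_i-w_{l_i}$ as a power series in the local coordinate $X_i$. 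I would verify that each step keeps all coefficients in $\widehat{\mc R}$.

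For (i), I first observe that an open GW point $\nu_l$ ($X=0$, $Y^3=-1$) is never a ramification point: \eqref{criticaldef} would force $1+2Y^3+qX=-1=0$, which is absurd. Hence $\nu_l\neq r$ as points of $\mc C$, so $w_l-u_r\notin\mbb Z\oplus\mbb Z\tau$ and $\wp(w_l-u_r),\wp'(w_l-u_r)$ are finite. Combining the stated modularity of $\wp(w_l),\wp'(w_l)$ with Lemma \ref{modularityofwpatcriticalpoints} through the addition formulas \eqref{grouplawforwp} shows these values lie in $\widehat{\mc R}$, and every higher coefficient $\wp^{(k)}(w_l-u_r)$ is a polynomial in $\wp(w_l-u_r),\wp'(w_l-u_r)$ with coefficients polynomial in $g_2,g_3$ via $\wp''=6\wp^2-\frac12 g_2$; by \eqref{g2g3} these are again in $\widehat{\mc R}$. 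Thus each $\wp(u_i-u_r),\wp'(u_i-u_r)$ expands as a power series in $u_i-w_{l_i}$ with $\widehat{\mc R}$-coefficients.

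For (ii), I would pass from $u$ to $X$ through the uniformization. Since $\nu_l$ is not a ramification point of $X$, the coordinate $X$ is genuinely local there, and the implicit function theorem on $H(X,Y)=0$ yields $Y=Y(X)$, a true power series with $Y(0)=Y_l$ a constant cube root of $-1$ and coefficients rational in $q$. Using the inverse map \eqref{inverseuniformization}, $\wp(u)=U=\kappa^2 r_1(x,y)$ and $\wp'(u)=V=\kappa^3 r_2(x,y)$ with $r_1,r_2$ rational in $(x,y)=(XY^{-1},Y)$ and in $q$; since $\kappa$ is constant along the curve, $\omega=du=dU/V=\kappa^{-1}(dr_1/r_2)$ is $\kappa^{-1}$ times a rational $1$-form in $(x,y)$ with $q$-coefficients. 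Substituting $Y=Y(X)$ and integrating from $\nu_l$ gives $u-w_l=\sum_{k\ge1}c_k X^k$ with $c_1\neq0$ and every $c_k$ a product of $\kappa^{-1}$ with a modular function of $q$; inverting the series expresses $X$, and the forms $du_i=\kappa^{-1}R_i(X_i)\,dX_i$, as power series in the $X_i$ with $\widehat{\mc R}$-coefficients. Feeding this into the series of step (i) and collecting powers of $X_1,\dots,X_n$ proves the lemma.

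The main obstacle is the bookkeeping of step (ii): one must confirm that the period normalization of $\omega$ contributes only modular factors, i.e. that the scaling constant $\kappa$ fixed by \eqref{kappa} is indeed a weight-one modular form (with multiplier), so that the $c_k$ and the $R_i$ are genuine quasi-meromorphic modular forms rather than merely $q$-dependent expressions. Once $\kappa$ is understood, the remaining operations—composition and inversion of power series and multiplication within $\widehat{\mc R}$—are formal and preserve the ring, and the finite degree of $\omega_{g,n}$ in $\hat\eta_1$ makes the $\Im\tau\to\infty$ limit innocuous.
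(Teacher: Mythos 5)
Your proposal is correct and follows essentially the same route as the paper's proof: expand $\wp(u_i-u_r)$ and $\wp'(u_i-u_r)$ at the open GW points and invoke the addition formulas \eqref{grouplawforwp} together with Lemma \ref{modularityofwpatcriticalpoints} for the modularity of $\wp(w_l-u_r)$, $\wp'(w_l-u_r)$. The only substantive difference is that you explicitly justify the modularity of the coefficients $\partial^k u/\partial X^k|_{\nu_l}$ by tracing the factor $\kappa^{-1}$ through the uniformization \eqref{inverseuniformization} (using that $\kappa$ in \eqref{kappa} is a weight-one modular form with multiplier), a point the paper's proof lists among the ingredients but does not argue.
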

\begin{proof}
Near open GW point $w_l$,
\begin{equation}
\wp(u_i-u_r)=\wp(w_l-u_r)+\wp'(w_l-u_r){\partial u\over \partial X}|_{\nu_l}X+O\left(X^2\right)\,.
\end{equation}
Hence coefficients of $\wp(u_{i}-u_r)$ as power series of $X$ near the $l$-th open GW point are polynomial of $\wp(w_l-u_r),\,\wp'(w_l-u_r),\,{\partial^k u\over \partial X^k}|_{\nu_l}$. By \eqref{grouplawforwp} $\wp(w_l-u_r),\,\wp'(w_l-u_r)$ are modular forms. Therefore, $\omega_{g,n}\,$, as polynomial of $\wp(u_i-u_r)$ and $\wp'(u_i-u_r)$ by theorem \ref{omegaquasimodular}, can also be expanded as power series of $X$ with coefficients quasi-meromorphic modular forms.
\end{proof}

Fix degree of stable maps on n-boundaries  $((d_1,k_1),\cdots,(d_n,k_n))\in H_1(\mc{L},\mbb{Z})^n$, and consider open-closed GW invariants with this boundary degree  $\mc{N}^{d_1\cdots d_n}_{k_1\cdots k_n}$. Then we have it is quasi-meromorphic modular form.

\iffalse
at degree  of , denoted by $\mc{N}^{d_1\cdots d_n}_{k_1\cdots k_n}$ 
\fi 

\begin{thm} (Modularity of open GW invariants).\\
For $2g-2+n>0$,  open Gromov-Witten invariant $\mc{N}^{d_1\cdots d_n}_{k_1\cdots k_n}$
 is quasi-meromorphic modular form of some modular group $\Gamma$ under closed mirror map. 
\end{thm}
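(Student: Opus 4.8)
The plan is to read off $\mc{N}^{d_1\cdots d_n}_{k_1\cdots k_n}$ as a single Taylor coefficient of the open--closed generating function and then transport the modularity already established for $\omega_{g,n}$ across the remodeling isomorphism. By definition $\mc{N}^{d_1\cdots d_n}_{k_1\cdots k_n}$ is the coefficient of the monomial $\tilde X_1^{d_1}\cdots\tilde X_n^{d_n}$ in $\mc{F}_{g,n}^{\mc{X},(\mc{L},f)}$, paired against the basis vector $\bigotimes_i\textbf{1}'_{-k_i/3}$ of $(H^*(\mc{B}\mu_3,\mbb{C}))^{\otimes n}$ that records the $\mu_3$-labels $k_1,\dots,k_n$. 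Since $d_{\tilde X_j}$ acts on the $\tilde X_j$-expansion by $\tilde X_j^m\mapsto m\,\tilde X_j^{m-1}d\tilde X_j$, extracting the coefficient of $\tilde X_j^{d_j-1}d\tilde X_j$ recovers the coefficient of $\tilde X_j^{d_j}$ in $\mc{F}$ up to the invertible scalar $d_j$. Hence it is equivalent to extract the corresponding coefficient of $d_{\tilde X_1}\cdots d_{\tilde X_n}\mc{F}_{(g,n)}^{\mc{X},\mc{L}}$.

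First I would invoke the remodeling identity \eqref{mirrortheorem}, which writes $d_{\tilde X_1}\cdots d_{\tilde X_n}\mc{F}_{(g,n)}^{\mc{X},\mc{L}}$ as the finite sum $\tfrac{(-1)^{g-1+n}}{3^n}\sum_{l_1,\dots,l_n}(\delta^{l_1\cdots l_n})^*\omega_{g,n}\,\psi_{l_1}\otimes\cdots\otimes\psi_{l_n}$, i.e. as $\omega_{g,n}$ expanded near the three open Gromov--Witten points in the local coordinate $X$ (recall $\hat X=X$ since $f=0$), decorated with the $H^*(\mc{B}\mu_3)$-valued factors $\psi_{l_i}$. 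The component carrying labels $(k_1,\dots,k_n)$ is then a fixed finite $\mbb{C}$-linear combination of the three expansions, because $\psi_l=\tfrac13\sum_k\xi_3^{-kl}\textbf{1}'_{k/3}$ relates the two bases by an invertible discrete Fourier matrix on each factor.

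Next I would apply Lemma \ref{expansionatopengwpt}: expanding each argument of $\omega_{g,n}$ about an open GW point $w_l$ gives a power series in $X$ whose coefficients lie in the ring of quasi-meromorphic modular forms for $\Gamma$, as already supplied by Theorem \ref{omegaquasimodular}. Thus every coefficient of every monomial $X_1^{m_1}\cdots X_n^{m_n}$ on the right-hand side is quasi-meromorphic modular. Passing from the B-model coordinate $X$ to the A-model open flat coordinate $\tilde X$ is the open mirror map, whose Taylor coefficients are governed by the disk term $d_{\tilde X}\mc{F}_{(0,1)}$ and whose $X$-expansion again has quasi-meromorphic modular coefficients by the same $\wp,\wp'$-at-modular-points argument. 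Re-expanding in $\tilde X$ and extracting the coefficient of $\tilde X_1^{d_1}\cdots\tilde X_n^{d_n}$ is then a finite algebraic operation inside this ring, producing $\mc{N}^{d_1\cdots d_n}_{k_1\cdots k_n}$ as an element of it. Since $q=q(\tau)$ is a Hauptmodul as in \eqref{qformula}, modularity in $q$ is modularity in $\tau$ under the closed mirror map, which is the assertion.

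The main obstacle is the open mirror map step: I must verify that the change of variables $\tilde X\leftrightarrow X$ is a power series whose coefficients remain inside the ring of quasi-meromorphic modular forms, so that composing it with the modular $X$-series of $\omega_{g,n}$ and re-extracting a single coefficient does not leave the ring. Everything else is finite-dimensional bookkeeping -- the $\mu_3$-Fourier transform between the $\psi_l$ and $\textbf{1}'_{k/3}$ bases and the harmless rescaling by $\prod_j d_j$ -- while the modularity of the $X$-coefficients themselves is already guaranteed by Lemma \ref{expansionatopengwpt} and Theorem \ref{omegaquasimodular}.
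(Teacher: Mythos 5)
Your proposal follows essentially the same route as the paper: extract $\mc{N}^{d_1\cdots d_n}_{k_1\cdots k_n}$ as a Taylor coefficient of the $X$-expansion of $\omega_{g,n}$ at the open GW points via \eqref{mirrortheorem}, undo the $\mu_3$ discrete Fourier transform between the $\psi_l$ and $\textbf{1}'_{k/3}$ bases, and invoke Lemma \ref{expansionatopengwpt} for the modularity of the expansion coefficients. The one step you flag as the ``main obstacle'' --- controlling the change of variables $\tilde X\leftrightarrow X$ --- is dispatched in the paper by recalling from \cite{fang2012open} that the open mirror map here is simply a rescaling $\tilde X=A(q)X$, so re-extracting the coefficient of $\tilde X_1^{d_1}\cdots\tilde X_n^{d_n}$ only introduces the factor $A(q)^{-\sum d_i}$, which is modular since it is a function of the Hauptmodul $q(\tau)$; with that input your argument closes and no general power-series re-expansion is needed.
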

\begin{proof} 
According construction in \cite{fang2012open}, the open mirror map is given by 
\begin{equation}
\tilde X=A(q)X\,,
\end{equation}
Then we have,
\begin{equation}
\mc{N}^{d_1\cdots d_n}_{k_1\cdots k_n}={1\over A(q)^{\sum d_i}}\cdot {1\over \prod_{i=o}^nd_i!}{\partial^{d_1-1}\over \partial X_1^{d_1-1}}\cdots{\partial^{d_n-1}\over \partial X_n^{d_n-1}}|_{X_i=0}\sum_{l_1\cdots l_n \in {0,1,2}}{\xi_3^{\sum_{i=1}^nk_il_i}\over 3^n}{{(\rho^{l_1\cdots l_n})}^*\omega_{g,n}\over dX_1\cdots dX_n}\,.
\end{equation}
where $\xi_3$ is 3 order root of 1.
Since ${(\rho^{l_1\cdots l_n})}^*\omega_{g,n}$ is the expansion of $\omega_{g,n}$ near the open GW point $\nu_{l_i}$ for argument $X_i$, it is power series of $X_i$ with quasi-meromorphic modular forms as coefficients. Hence $\mc{N}^{d_1\cdots d_n}_{k_1\cdots k_n}$ is a quasi-meromorphic modular form. 
\end{proof}

\begin{rem}
This subfamily I choose now is just to simplify computation, the method can be hopefully extended to more general subfamily as follows. \\
1. Every cubic curve is birational to a Weierstrass normal form, hence we can get uniformization of a general cubic \eqref{toriccoordmirrorcurve}, and coordinate $X\,,\,Y$ are rational functions of $\wp,\,\wp'$ with coefficients in a filed which is a finite extension of field $\mbb{C}(a_1,\cdots,a_7)$. By a proper choice of one-dimensional subfamily, all $a_i$ are modular functions for some congruence subgroup $\Gamma\subset SL(2,\mbb{Z})$. See \cite{connell1996elliptic}.\\
2. The ramification points set $R=\{p\in \mc{C}, d(XY^f)=0\}$ is intersection of the following two plane curves.
\begin{align}\label{generalcubicramifpt}
&X+Y^3+1+a_1 {X^3\over Y^{3}}+a_2{X\over Y^{1}}+a_3{X^2\over Y^{2}}+a_4{X^2\over Y^{1}}+a_5XY+a_6Y^2+a_7Y=0\,\nonumber\\
&fX{-}3Y^3{+}(3f{+}3)a_1{X^3\over Y^3}{+}(f+1)a_2{X\over Y}{+}(2f{+}2)a_3{X^2\over Y^2}{+}a_4(2f{+}1){X^2\over Y}{+}(f{-}1)a_5XY{-}2a_6Y^2{-}a_7Y=0
\end{align}
Claim: the solutions of these two equations are modular functions for some smaller group $\Gamma'\subset \Gamma \subset SL(2,\mbb{Z})$.\\
3. Then we can apply the similar process in section 4 for specific subfamily, and get the similar results.
\end{rem}

\bibliographystyle{amsalpha}
\bibliography{referenceg1}

Department of Mathematics, University of Michigan, 2074 East Hall, 530 Church Street, Ann Arbor, MI 48109, USA
E-mail: yczhang15@pku.edu.cn

\end{document}